\documentclass[letterpaper,10pt,journal,twoside]{IEEEtran}

\IEEEoverridecommandlockouts

\usepackage{amsmath,amssymb,amsfonts,amsthm,bm}
\usepackage{mathtools}
\usepackage{cite}
\usepackage{microtype}
\usepackage{xcolor}
\usepackage{algorithm}
\usepackage{algpseudocode}
\usepackage{enumerate}
\usepackage{booktabs}
\usepackage{graphicx}
\usepackage{comment}
\usepackage{multirow}
\usepackage{soul}
\usepackage[normalem]{ulem}
\usepackage{bm}

\newtheorem{proposition}{Proposition}
\newtheorem{corollary}{Corollary}

\newtheorem{remark}{Remark}
\newtheorem{definition}{Definition}
\newtheorem{property}{Property}
\newtheorem{lemma}{Lemma}

\newcommand{\Zcal}{\mathcal Z}                 % state-input space
\newcommand{\z}{{\bm{\zeta}}}                  % state-input pair (vector, bold; braces protect subscripts)
\newcommand{\zb}[1]{\bar{\bm{\zeta}}^{(#1)}}   % measured pair i (bold)
\newcommand{\xsucc}[1]{\bar{\mathbf x}^{(#1)+}} % measured successor i (vector, bold)
\newcommand{\db}[1]{\bar d^{(#1)}}             % per-coordinate residual i (scalar)
\newcommand{\dtil}{\tilde\delta}               % normalized discrepancy (scalar coord.)
\newcommand{\rhomax}{\rho_{\max}}              % max directional trust (was \bar\rho)
\newcommand{\nx}{n}                              % state dimension
\newcommand{\nth}{n_\theta}                              % theta dimension
\newcommand{\Ntr}{N_{\mathrm{tr}}}
\newcommand{\Ncal}{N_{\mathrm{c}}}
\newcommand{\Dtr}{\mathcal D_{\mathrm{tr}}}
\newcommand{\Dcal}{\mathcal D_{\mathrm{c}}}
\newcommand{\Dset}{\mathbb D}
\newcommand{\Id}[1]{\mathbf{I}_{#1}}
\newcommand{\In}{\Id{\nx}}
\newcommand{\qeps}{\hat q_\varepsilon}
\newcommand{\qsym}{\hat q_{\varepsilon,\mathrm{sym}}}
\newcommand{\qdir}{\hat q_{\varepsilon,\mathrm{dir}}}
\newcommand{\qnorm}{\hat q_{\varepsilon,\mathrm{norm}}}
\newcommand{\qgdir}{\hat q_{\varepsilon,\mathrm{gdir}}}

\newcommand{\vx}{\mathbf{x}}
\newcommand{\vu}{\mathbf{u}}
\newcommand{\vw}{\mathbf{w}}
\newcommand{\vv}{\mathbf{v}}
\newcommand{\vy}{\mathbf{y}}
\newcommand{\vd}{\mathbf{d}}
\newcommand{\ve}{\mathbf{e}}
\newcommand{\vz}{\mathbf{z}}
\newcommand{\vDelta}{\bm{\Delta}}
\newcommand{\vth}{\bm{\theta}}
\newcommand{\vzero}{\mathbf{0}}

\newcommand{\vP}{\mathbf{P}}
\newcommand{\vPinv}{\mathbf{P}^{-1}}
\newcommand{\vdtil}{\tilde{\bm\delta}} 
\newcommand{\vdb}[1]{\bar{\mathbf d}^{(#1)}}

\newcommand{\R}{\mathbb R}
\newcommand{\Hcal}{\mathcal H}

\newcommand{\Xbb}{\mathbb X}

\newcommand{\sign}{\operatorname{sign}}

\newcommand{\dhat}{\hat\delta}
\newcommand{\that}{\hat\vth}
\newcommand{\Kc}{\mathcal{K}_{c}}
\newcommand{\Sn}{\mathsf{S}}
\newcommand{\Crho}{\mathcal{C}_{\rho}}
\newcommand{\vball}{\Omega}
\newcommand{\chiq}{\chi_q}
\newcommand{\chiqg}{\chi_{q,g}}

\usepackage{tikz}
\definecolor{Csym} {RGB}{ 87,181,232}
\definecolor{Cnorm}{RGB}{  0,115,179}
\definecolor{Cdir} {RGB}{230,158,  0}
\definecolor{Cgdir}{RGB}{214, 94,  0}
\definecolor{Ctrue}{RGB}{  0,158,115}
\definecolor{Cnom}{RGB}{115,71,0}

\DeclareRobustCommand{\swbox}[1]{%
  \begin{tikzpicture}[baseline=0.2ex]
    \draw[#1,fill=#1!18,line width=0.5pt] (0,0) rectangle (1.5ex,1.5ex);
  \end{tikzpicture}}
\DeclareRobustCommand{\swball}[1]{%
  \begin{tikzpicture}[baseline=0.2ex]
    \draw[#1,fill=#1!18,line width=0.5pt] (0.75ex,0.75ex) circle (0.75ex);
  \end{tikzpicture}}
\DeclareRobustCommand{\swcap}[1]{%
  \begin{tikzpicture}[baseline=0.2ex]
    \draw[#1,fill=#1!25,line width=0.5pt,rounded corners=0.75ex]
      (0,0) rectangle (2.6ex,1.5ex);
  \end{tikzpicture}}

\DeclareRobustCommand{\swdash}[1]{%
  \begin{tikzpicture}[baseline=0.2ex]
    \draw[#1,line width=1.1pt,dash pattern=on 1.8pt off 1.3pt]
      (0,0.7ex) -- (2.8ex,0.7ex);
  \end{tikzpicture}}
  \DeclareRobustCommand{\swballd}[1]{%
  \begin{tikzpicture}[baseline=0.2ex]
    \draw[#1,line width=0.7pt,dash pattern=on 1.2pt off 1.0pt]
      (0.75ex,0.75ex) circle (0.75ex);
  \end{tikzpicture}}
\title{Directional Conformal Uncertainty Quantification\\ from Learned Model Discrepancy}

\author{Cesare Donati, Fabrizio Dabbene, Martina Mammarella\thanks{%
This work has been submitted to the IEEE for possible publication.}
\thanks{
The authors are with the Institute of Electronics, Computer and Telecommunication Engineering, National Research Council of Italy (CNR-IEIIT), 10129 Torino, Italy (e-mail: martina.mammarella@cnr.it; cesaredonati@cnr.it; fabrizio.dabbene@cnr.it).}%
}

\begin{document}
\maketitle

\begin{abstract}
We propose a conformal prediction framework for 
quantifying the error of physics-based predictors used in control, where simple models are preferred for synthesis, certification, and real-time use. Because these models are selected for compatibility with the intended application rather than for maximal predictive accuracy, their error combines process noise with a state-dependent discrepancy.
A data-driven discrepancy estimate defines an asymmetric nonconformity score: errors consistent with the learned discrepancy are penalized less than equally large in the opposite direction. 
The sets remain in the nominal model's error coordinates and are physics-consistent, i.e., they contain a ball at the origin.
The construction is agnostic to the discrepancy model (kernel, neural-network, or other), preserves finite-sample marginal validity under exchangeability, and provably narrows the interval over a characterizable state-input region. 
We further show that, for RKHS models, the power function provides a local confidence measure for adaptive score design and we extend the construction to the multivariate case via a Minkowski-gauge score yielding a jointly calibrated disturbance set.
\end{abstract}

\begin{IEEEkeywords}
Conformal prediction, physics-informed identification, uncertainty quantification, kernel methods, robust control, set-membership estimation.
\end{IEEEkeywords}

\section{Introduction}
%In the twentieth century, fundamental developments in physics, such as Heisenberg's uncertainty principle, demonstrated that 
Mathematical models are inevitably imperfect representations of reality due to incomplete knowledge of the underlying physical processes, simplifying assumptions, and uncertain model parameters~\cite{Ghanem2017}.
%\revnote{Consequently, in the framework of model validation, one seeks to determine whether a selected mathematical model provides an adequate representation of the system of interest. \CD{-- Per accorciare forse toglierei questa frase, è ripetuto subito dopo --}} 
Uncertainty quantification (UQ) provides a systematic framework for the characterization, propagation, and estimation of uncertainties, with the goal of assessing how uncertainty in model inputs, parameters, and structure affects the reliability of model predictions~\cite{McClarren2018}. Reliable prediction-error quantification has therefore become increasingly important in data-driven control, optimization, and decision-making, particularly in safety-critical scenarios where 
guarantees must be extracted from a finite amount of
data~\mbox{\cite{Hewing2020,Mesbah2016}}.

In many engineering applications, however, the predictor deployed online is deliberately kept simple: linearized, reduced-order, or affine-in-parameter models are often preferred because their structure enables controller synthesis, optimization, and real-time implementation \cite{Mayne2005}. 
Chosen for compatibility with the intended application rather than for maximal predictive accuracy, these \emph{physics-based} nominal models preserve physical interpretability and computational efficiency, but inevitably exhibit systematic, state-dependent discrepancies from the true dynamics. Together with process noise, these discrepancies form an equivalent disturbance that robust or stochastic designs built on the nominal model must quantify.

To reduce this discrepancy while preserving the interpretability of the nominal model, recent physics-informed and grey-box identification methods exploit experimental data to construct an approximation of such discrepancy \cite{Donati2025}.
The resulting discrepancy model can be obtained using a variety of regression techniques, including Gaussian processes, neural networks, sparse regression, and kernel-based methods~\mbox{\cite{Care2023,Quaghebeur2021}}. In particular, reproducing kernel Hilbert space (RKHS) approaches have recently attracted considerable attention thanks to their flexibility and favorable theoretical properties~\cite{Donati2026}.

These methods are usually regarded as tools for improving point accuracy. They also, however, deliver an estimate of the {expected} model discrepancy, and hence information on the likely direction and magnitude of future prediction errors. This motivates the central idea of this work: the learned discrepancy is treated not as a correction to the nominal prediction, but as prior knowledge about the prediction error itself. The distinction matters whenever a control or estimation scheme has been designed and certified around the deployed nominal model: in that case, replacing that predictor by a learned correction alters the very structure on which optimization and recursive set propagation rely. We therefore leave the nominal predictor unchanged and use the learned discrepancy \emph{only} to shape the uncertainty description.

Uncertainty quantification has traditionally relied on probabilistic error models, asymptotic confidence regions \cite{Ljung1999}, or scenario-based and randomized techniques \cite{Calafiore2006}. 
Within the latter class, probabilistic scaling provides a powerful distribution-free methodology for calibrating prediction-error bounds directly from data~\cite{Mirasierra2021}.
Among available UQ methodologies, conformal prediction (CP)~\cite{Vovk2005,Shafer2008,Lei2018,Angelopoulos2023} has emerged as a general framework for constructing finite-sample, distribution-free prediction sets under the sole assumption of exchangeability. CP has also entered control applications, including calibrated forecasts for safe planning and model predictive control~\cite{Lindemann2023}, trajectory-level error regions for stochastic control~\cite{Cleaveland2024}, and state-dependent model-error sets for robust design~\cite{Lindemann2025}. 

Standard split conformal prediction typically relies on \textit{symmetric} nonconformity scores, such as the absolute prediction error, thereby  treating overestimation and underestimation identically. 
As a consequence, it must
remain conservative enough to accommodate the largest calibration errors, even where the nominal predictor is reliable or errors exhibit a clear directional trend.
Existing asymmetric conformal prediction methods alleviate this conservativeness by treating the two tails
independently and modifying the geometry of the prediction sets \cite{Linusson2014,Romano2019}, yet they do not exploit the information conveyed by the learned discrepancy to shape the disturbance set around the deployed predictor.

{
A second limitation appears as soon as the disturbance is vector-valued, as it
is in any realistic control or estimation setting. Calibrating one coordinate at
a time \cite{Messoudi2021,Diquigiovanni2022} preserves the scalar design and
yields axis-aligned boxes, for which Minkowski sums and support-function
evaluations are immediate, but it pays a union bound over the coordinates and
cannot represent cross-channel correlation. Alternatively, calibrating the whole vector jointly
removes the union bound at the cost of the box geometry. Neither route dominates
the other. In both cases, however, the multivariate scores commonly adopted are
norm-based \cite{Messoudi2022,johnstone2021} and hence symmetric by
construction, so the directional information carried by the learned discrepancy
is again discarded.
}

A further structural requirement arises in the motivating control and
estimation settings. There, disturbances are defined relative to the
deployed nominal prediction, so the calibrated set must contain a
nontrivial neighborhood of the origin: the nominal evolution has to remain
an interior point with a strict local robustness margin. We refer to this
requirement as \emph{physics consistency}, and it is met by construction
by the framework proposed here.

Motivated by these observations, this paper proposes a \textit{directional conformal prediction framework} that exploits the information contained in the learned discrepancy  to construct an \textit{asymmetric} nonconformity score, without modifying the deployed predictor. 
Errors aligned with the expected discrepancy are assigned lower nonconformity than equally large errors in the opposite direction. {Both multivariate routes are then developed: the scalar score is
applied coordinate-wise and assembled into a box, while for joint calibration
the norm is replaced by the Minkowski gauge \cite{Rockafellar1970} of an
asymmetric convex body, which carries the one-sided penalty over to
vector-valued disturbances while keeping the origin an interior point.}

The main contributions of this work are:
\emph{(i)} Prediction errors are calibrated directly in the disturbance
coordinates of the deployed nominal model, as required by any scheme that keeps the nominal predictor online and propagates uncertainty around it, for instance, tube-based control and set-membership estimation, discussed in Section~\ref{sec:prob};
\emph{(ii)} A directional nonconformity score is introduced that
reallocates uncertainty toward the anticipated error direction while
retaining a nontrivial interior margin around the origin, so that physics
consistency holds a priori;
\emph{(iii)} Since the score is fixed before calibration, the resulting
sets inherit the finite-sample marginal validity of split conformal
prediction under exchangeability; their efficiency gain over the symmetric
baseline is quantified by an explicit improvement region, computable from
the calibration data alone and therefore verifiable before deployment;
\emph{(iv)} For RKHS discrepancy models, the power function is shown to
provide a local reliability indicator, yielding a reliability-adaptive
score that automatically reverts to the symmetric one where the training
data offer no support;
\emph{(v)} The directional principle is extended to the multivariate case
through a Minkowski-gauge score, yielding a single jointly calibrated,
physics-consistent disturbance set.

\subsubsection*{Outline}
The paper is structured as follows. Section~\ref{sec:prob} introduces the problem formulation and motivates the need for calibrated disturbance sets through representative control and estimation examples. %Section~\ref{sec:CP} reviews the conformal prediction framework for disturbance-set calibration and establishes its finite-sample validity directly in the multivariate setting, and shows that a valid set can be assembled coordinate-wise via a Bonferroni correction, yielding the box geometry required by the recursions of Section~\ref{sec:prob} and reducing the design to a single scalar score.
Section~\ref{sec:CP} reviews conformal calibration of disturbance sets, establishes finite-sample validity directly in the multivariate setting, and shows that a valid set can be assembled coordinate-wise via a Bonferroni correction. This yields the box geometry required by the recursions of Section~\ref{sec:prob} and reduces the design to a single scalar score.
Section~\ref{sec:directional} presents the proposed directional conformal prediction methodology for one coordinate, introducing asymmetric nonconformity scores and analyzing the geometry, admissibility, and efficiency of the resulting uncertainty sets. Then, the specialization of the framework to kernel-based discrepancy models is discussed in Section~\ref{sec:kernel}, showing how RKHS reliability measures can be employed to design \textit{adaptive} directional scores. Section~\ref{sec:md_conformal} develops the complementary route,
calibrating the full vector jointly, providing a joint multivariate construction based on the Minkowski gauge of an asymmetric convex body.
Finally, 
Section~\ref{sec:numerics} presents the numerical examples, for both the scalar and the multidimensional case. Main conclusions are drawn in Section~\ref{sec:conclusions}.

\vspace{.2cm}
\subsubsection*{Notation} 
Given a generic variable $x$, we denote by $\bar x$ its realized (measured) value. Vectors and matrices are set in boldface.
Time is written in parentheses and vector components as subscripts: for a
vector $\vx\in\R^{\nx}$, $\vx(k)$ is its value at  time $k$ and $x_j$ is its
$j$-th component. 
The notation
$\vx(\ell|k)$ denotes the value predicted $\ell$ steps ahead of
time $k$. The superscript $(i)$ denotes the $i$-th entry of a dataset,
e.g., $\bar \vx^{(i)}$ is the $i$-th measured state in the dataset, and
$\bar \vx^{(i)+}$ denotes its measured successor.
We denote by
$\mathbb B_p(r)\doteq\{\vx\in\R^{\nx}:\|\vx\|_p\le r\}$
the closed ball of radius $r$ induced by the $p$-norm. In particular,
$\mathbb B_2(r)$ denotes the Euclidean ball. For a weighted norm
$\|\vx\|_\vP\doteq\sqrt{\vx^\top \vPinv\vx}$, we analogously write
$\mathbb B_\vP(r)\doteq\{\vx\in\R^{\nx}:\|\vx\|_\vP\le r\}$.
The symbol
$\oplus$ denotes the Minkowski sum. Finally,
$\Pr\{\cdot\}$ denotes probability with respect to the underlying
data-generating distribution, made precise where used.

\section{{Problem setting}}
\label{sec:prob}

\subsection{Nominal models and equivalent disturbances}

Let us consider an (unknown) dynamical system
\begin{equation}
\bar \vx(k+1)=g(\bar \vx(k), \bar \vu(k))+\vw(k),
\label{eq:sys_g}
\end{equation}
where $\bar \vx(k)\in\mathbb{R}^{\nx}$ is the realized state, $\bar \vu(k)\in\mathbb{R}^{n_u}$ is the measured input, and $\vw(k)\in\R^{\nx}$ is an i.i.d.\ zero-mean random vector 
representing the additive noise. {We collect the state-input pair in the variable
\begin{equation}
\z\doteq(\vx,\vu)\in\Zcal\doteq\R^{\nx}\times\R^{n_u},
\label{eq:zeta_def}
\end{equation}
so that, e.g., $\bar\z(k)=(\bar \vx(k),\bar \vu(k))$ and $g(\z)=g(\vx,\vu)$.}

To approximate the unknown function $g\,:\,\Zcal \rightarrow \mathbb{R}^n$, we specify a parametric model of the form
\begin{equation}
  f(\z,{\bm \theta}), \qquad {\bm\theta}\in\Theta\subseteq \mathbb{R}^{\nth},
  \label{eq:nominal_class}
\end{equation}
where $\bm\theta$ is an unknown vector of parameters, to be identified, characterizing the model itself. The function $f$ modeling the system is defined by prior knowledge or engineering constraints, and it is typically selected not only to guarantee desired prediction accuracy but also for control task requirements. It may be linear, bilinear, affine-in-parameters, reduced-order, or otherwise compatible with control, optimization, analysis, or certification.

Given $f$, the classical approach envisions the characterization of the parameter $\bm\theta$ relying on a training dataset {$\Dtr=\{(\zb{i},\xsucc{i})\}_{i=1}^{\Ntr}$, where $\xsucc{i}$ denotes the measured successor of $\bar{\mathbf{x}}^{(i)}$ under input $\bar {\mathbf u}^{(i)}$,} collected from \eqref{eq:sys_g}, so that the approximation error is minimized{, e.g., via least squares,
\begin{equation}
\that=\arg\min_{\vth\in\Theta}\sum_{i=1}^{\Ntr}
{\bigl\|\bar\vx^{(i)+}-f(\zb{i},\vth)\bigr\|^2} .
\label{eq:LS}
\end{equation}}
This eventually leads to a {deployed prediction} model of the form
\begin{equation}
{\vx(k+1)= f(\z(k),\that),}
\label{eq:deployed}
\end{equation}
{where {$\vx(k)$} denotes the predicted state at time $k$.} 
%Of course, the capabilities of approximating $g$ are limited by the chosen model class $f$. Since the choice may be linked to having $f$ compatible with the needs of the subsequent task, the discrepancy between the nominal model $f$ and the unknown system function $g$ may be relevant and not negligible.
The ability to approximate $g$ is limited by the model class $f$ which, being selected for compatibility with the subsequent task, may leave a non-negligible discrepancy with respect to $g$.
A more principled approach sees the observed system \eqref{eq:sys_g} as
\begin{equation}
{\bar \vx(k+1)= f(\bar \z(k),\that)+\vDelta(\bar \z(k))+\vw(k),}
\label{eq:sys_gen}
\end{equation}
where the additive term {{$\vDelta(\bar\z(k))\doteq g(\bar\z(k))-f(\bar\z(k),\that)\in\R^{\nx}$} is the \emph{residual} structural discrepancy of the identified nominal model. Note that $\vDelta$ is defined relative to the deployed model $f(\cdot,\that)$: once $\that$ is fixed by the identification step, $\vDelta$ is a fixed, deterministic function.}

For the purposes of uncertainty quantification, it is convenient to aggregate the structural discrepancy and the noise into a single \emph{equivalent disturbance},
\begin{equation}
{\vd(\bar\z(k))\doteq \bar \vx(k+1){-}f(\bar\z(k),\that)
           =\vDelta(\bar\z(k)){+}\vw(k)\;\in\R^{\nx},}
\label{eq:d_def}
\end{equation}
so that the true system \eqref{eq:sys_gen} is compactly rewritten as
\begin{equation}
{\bar \vx(k+1)= f(\bar \z(k),\that)+\vd(\bar \z(k)).}
\label{eq:sys_init}
\end{equation}

{The equivalent disturbance is defined relative to the deployed nominal model $f(\bar \z(k),\that)$, which remains the predictor used online by, e.g.,  the control or estimation algorithm for \eqref{eq:sys_init}. Consequently, rather than replacing the nominal model \eqref{eq:deployed} with the learned discrepancy, our objective is to characterize a state-dependent uncertainty set $\Dset(\z)$ for the equivalent disturbance
$\vd(\z)$ that can be directly combined with the deployed model.

\subsection{Physics consistency}
\label{sec:ph_cons}
Two structural requirements on $\Dset(\z)$ follow directly from the
deployment of the nominal physics model.  First, we note that $\Dset(\z)$ is expressed in the
prediction-error coordinates
\[
    \vd = \vx^+-f(\z,\that),
\]
and is therefore \textit{anchored} at the deployed nominal model. Second, in these coordinates, the nominal one-step prediction corresponds to the
zero-error realization,
\[
    \vd=\vzero\in\mathbb{R}^{\nx}.
\]
Hence, any uncertainty description that propagates
$f(\cdot,{\that})$ as the center of the prediction region should
preserve the nominal prediction as an admissible outcome with a
nonzero margin. Accordingly, we require
\[
    \vzero\in\operatorname{int}(\Dset(\z)),
\]
or, equivalently, the set $\Dset(\z)$ is guaranteed to contain a nontrivial
norm ball centered at the origin, i.e.,
\[
    \exists\,r>0:
    \quad
    \mathbb{B}_{p}(r)\subseteq\Dset(\z).
\]
An uncertainty set satisfying this condition is said to be
\emph{physics-consistent} with the deployed nominal model. This interior-point requirement is stronger than mere containment of
the origin. It ensures that the nominal physics model is surrounded by
a nonzero set of admissible prediction errors, preventing the deployed
model from lying on the boundary of the uncertainty description. A
boundary point would imply that arbitrarily small perturbations could
make the nominal prediction infeasible, compromising recursive
propagation and potentially invalidating the uncertainty description at
the first deployment step.

The following examples illustrate the relevance of the physics consistency property in two representative applications: tube-based predictive control and set-membership state estimation. Although these examples focus on two specific settings, the same principle applies whenever uncertainty is propagated around a nominal model through an equivalent disturbance representation: the geometry of the propagated uncertainty is completely determined by the geometry of the disturbance set. Thus, the disturbance set becomes the interface through which the statistical quantification of the uncertainty is translated into uncertainty propagation within model-based prediction algorithms.

\subsection{Motivating example 1: {Tube-based predictive control}}
\label{sec:ex_mpc}
To illustrate how the structural requirements identified before naturally arise, consider a tube-based MPC scheme built upon the deployed nominal model \cite{Mayne2005, Rawlings2017}. For simplicity, we envision a linear approximating model class, i.e.,
\begin{equation}
f(\z(k),\that)\doteq \mathbf A(\that)\vx(k)+\mathbf B(\that)\vu(k),
\label{eq:sys_lin}
\end{equation}
with $\mathbf A(\that)$ and $\mathbf B(\that)$ of appropriate dimensions and obtained from the identified nominal model.
In tube-based approaches, the nominal model \eqref{eq:sys_lin} is propagated, while the effect of the equivalent disturbance $\vd(\z)$ in \eqref{eq:d_def}} is captured through an uncertain tube around it.
To this end, the state predicted $\ell$ steps ahead of time $k$ is decomposed as $\vx(\ell|k)=\vz(\ell|k)+\ve(\ell|k),$ where $\vz(\ell|k)$ is the nominal prediction and $\ve(\ell|k)$ is the deviation induced by the disturbance. Assuming a standard feedback control law of the form $\vu(\ell|k)=\vv(\ell|k)+ \mathbf{K}(\vx(\ell|k)-\vz(\ell|k))$, where $\vv(\ell|k)$ is the nominal control variable and $\mathbf K$ is designed so that $\mathbf A_{\mathbf{K}}(\that)=\mathbf A(\that)+\mathbf B(\that)\mathbf{K}$ is Schur stable, the nominal and error dynamics become
\begin{subequations}
\begin{align}
\label{eq:z_sys}
\vz(\ell+1|k)&=\mathbf A(\that)\vz(\ell|k)+\mathbf B(\that)\vv(\ell|k),\\
\label{eq:e_sys}
\ve(\ell+1|k)&=\mathbf A_{\mathbf{K}}(\that)\ve(\ell|k)+\vd(\z(\ell|k)),
\end{align}
\label{eq:TRMPC_sys_double}%
\end{subequations}
{initialized so that $\vz(0|k)=\bar \vx(k),\,\ve(0|k)=\vzero$, and with $\z(\ell|k)\doteq(\vx(\ell|k),\vu(\ell|k))$.}  Hence,
\begin{equation}
	\ve(\ell|k)=\sum_{j=0}^{\ell-1}\mathbf A_{\mathbf{K}}^{\ell-1-j}(\that)\, \vd(\z(j|k)),
\end{equation}
showing that the propagated uncertainty $\ve(\ell|k)$ is completely determined by the characterization of the equivalent disturbance $\vd(\z(j|k))$.

Now, suppose that a calibrated disturbance set $\Dset(\z(\ell|k))$ is available so that $\vd(\z(\ell|k))\in\Dset(\z(\ell|k))$ with given probability. Then, the error dynamics \eqref{eq:e_sys} satisfy 
\begin{equation}
\ve(\ell+1|k) \in \mathbf A_{\mathbf{K}}(\that) \ve(\ell|k)\oplus\Dset(\z(\ell|k)),
\label{eq:one_step_error_set}
\end{equation}
which provides the basic recursion used to propagate the tube cross-sections over the prediction horizon. Therefore, every structural property of the propagated uncertainty tube is inherited from the geometry of the disturbance set through the recursion \eqref{eq:one_step_error_set}.
%
%Here, condition \eqref{eq:one_step_error_set} highlights why the disturbance set shall satisfy the physical consistency property discussed in Section \ref{sec:ph_cons}. Since the uncertain tube is propagated around the deployed nominal prediction $f(\cdot,\that)$, the equivalent disturbance is expressed in coordinates centered at the nominal model, with $\vd(\z)=\vzero$ representing the nominal evolution itself. As a consequence, the disturbance set shall contain the origin in order for the nominal trajectory to remain a feasible realization. Otherwise, a disturbance set that is calibrated but not physically consistent may allocate uncertainty in directions that are incompatible with the deployed nominal model, thus artificially enlarging the propagated tube and introducing unnecessary conservatism. 
Recursion \eqref{eq:one_step_error_set} makes the physics consistency requirement of Section~\ref{sec:ph_cons} operational: the tube is propagated around $f(\cdot,\that)$, so $\vd(\z)=\vzero$ is the nominal evolution and must remain a feasible realization at every step. A set that is calibrated but not physics-consistent 
may allocate uncertainty in directions incompatible with the deployed model, unnecessarily enlarging the tube while excluding the nominal evolution or placing it on the boundary, thereby compromising the tube recursion.

\subsection{{Motivating example 2: Set-membership state estimation}}
\label{sec:ex_filtering}
{As a second representative application, consider the problem of estimating the state of an uncertain system \eqref{eq:sys_init} from measurements {$\vy(k)=h(\bar \vx(k)) + {\bm\nu}(k)$}, where $h$ is a known output map and the measurement noise satisfies {${\bm\nu}(k)\in\mathbb V$} with {$\mathbb V\subset\R^{n_y}$} a compact and convex set. In the set-membership framework, the estimator propagates a {set} {$\Xbb(k|k)\subseteq\R^{\nx}$} guaranteed to contain
the true state. Using the deployed nominal model $f(\cdot,\that)$, the one-step prediction and measurement-update steps are given by
\begin{subequations}
\label{eq:SMF}
\begin{align}
\Xbb(k+1|k)&=f\bigl(\Xbb(k|k),\bar \vu(k),\that\bigr)\oplus\Dset(k),
\label{eq:SMF_pred}\\
\Xbb(k+1|k+1)&=\begin{aligned}[t]&\Xbb(k+1|k)\cap{}\\
&\Bigl\{\vx\in\R^{\nx}\,\Big|\,\vy(k+1)-h(\vx)\in\mathbb V\Bigr\},
\end{aligned}
\label{eq:SMF_upd}
\end{align}
\label{es:SMF}%
\end{subequations}
where $f(\Xbb,\bar\vu,\that)\doteq\{f((\vx,\bar\vu),\that):\vx\in\Xbb\}$ denotes the image of the set $\Xbb$ under the nominal map and we use the outer bound $\Dset(k)\supseteq\bigcup_{\vx\in\Xbb(k|k)}\Dset\bigl((\vx,\bar\vu(k))\bigr)$.
The prediction step \eqref{eq:SMF_pred} shows that the uncertainty affecting the state estimate is entirely determined by the characterization of the equivalent disturbance through the geometry of the set $\Dset(\z)$. 

%Suppose that a disturbance set $\Dset(\z)$ is available. 
Recursion \eqref{eq:SMF} thus characterizes the states consistent with both the process model and the measurements. As in the tube-based MPC example, the prediction step is built on
$f(\cdot,\that)$, so that the disturbance is expressed in coordinates centered at the nominal prediction and the origin must belong to $\Dset(\z)$ for that prediction to remain admissible throughout the recursion.
%. Consequently, the origin shall belong to $\Dset(\z)$ in order to have the nominal prediction $f(\bar\z_k,\that)$ admissible during the recursion of the uncertainty set. 

The effectiveness of set-membership estimation depends critically on how the disturbance set is constructed. Since the prediction step is recursively repeated, any unnecessary enlargement of the disturbance set accumulates over time, leading to progressively larger state enclosures also in the regions where the nominal model is reliable. Therefore, the objective is not simply to obtain a calibrated disturbance set, but to design disturbance sets whose geometry reflects the structure of the equivalent disturbance while remaining compatible with the deployed nominal model.
The directional framework proposed in the following addresses exactly this issue, analogous to the one defined for the tube-based MPC: it delivers calibrated, state-dependent sets that remain anchored at the nominal prediction while allocating width asymmetrically according to the learned discrepancy.

\begin{remark}[Connection with covariance-based filtering]
\label{rem:ekf}
A covariance-based filter (e.g., the extended Kalman filter) can still use the calibrated set through an ellipsoidal outer approximation $\Dset(\z)\subseteq\mathcal E\bigl(\mathbf c(\z),\mathbf P(\z)\bigr)\doteq\{\vd\,|\,(\vd-\mathbf c(\z))^\top \mathbf P(\z)^{-1}(\vd-\mathbf c(\z))\le1\}$, where the center $\mathbf c(\z)$ absorbs the \emph{systematic bias} of the discrepancy and the shape matrix $\mathbf P(\z)$ plays the role of a \emph{state-dependent process-noise covariance} $\mathbf Q(k)$. The prediction step becomes an offset-corrected covariance propagation, with
$\mathbf c(\z)$ acting as a known input bias.% and $\mathbf P(\z)$ replacing a hand-tuned $\mathbf Q(k)$.
\end{remark}

\section{{Conformal calibration of disturbance sets}}
\label{sec:CP}

Assume we are given  a new state-input pair $\z\in\Zcal$, where
$\Zcal=\R^{\nx}\times\R^{n_u}$ denotes the state-input domain
introduced in Section~\ref{sec:prob}. For a prescribed probability level 
$\varepsilon\in(0,1)$, we aim to construct a set-valued map
$\Dset:\Zcal\rightarrow\Kc(\R^{\nx})$,
where $\Kc(\R^{\nx})$ denotes the family of compact convex subsets of
$\R^{\nx}$, satisfying
\begin{equation}
\Pr\left\{\vd(\z)\in\Dset(\z)\right\}\geq 1-\varepsilon ,
\label{eq:dk_prob}
\end{equation}
where the probability is taken with respect to the joint distribution of
the query state-input pair $\z$ and the associated disturbance
$\vd(\z)$.
Condition \eqref{eq:dk_prob} is the standard \emph{marginal} coverage
guarantee.

The main idea is to use a finite collection of observed {transitions} to calibrate a prediction set for the equivalent disturbance associated with a new state-input pair. More precisely, let
\begin{equation*}
\Dcal \doteq\left\{(\zb{i},\xsucc{i})\right\}_{i=1}^{\Ncal},
\end{equation*}
denote a {calibration} dataset of $\Ncal$ observed {transitions, disjoint from $\Dtr$}, where the {realized} disturbances are
\begin{equation}
{\bar\vd^{(i)}}\doteq \vd(\zb{i})= \xsucc{i} - f(\zb{i},\that){,\quad i=1,\ldots,\Ncal.}
\label{eq:dk_def}
\end{equation}

\begin{remark}[On the exchangeability condition]
\label{rem:exch}
The standard conformal coverage guarantee requires the calibration samples and the future test sample to be exchangeable. This condition generally fails for data collected along a single trajectory, since the states are recursively generated and therefore temporally dependent. Exchangeability can instead be justified when the tuples are obtained from independent trajectories generated under the same initial-condition distribution and policy, as assumed in this work. For single-trajectory data, one must rely on weaker dependence assumptions or use conformal methods specifically designed for non-exchangeable or time-series data (see, e.g., \cite{Tibshirani2019,Zaffran2022,Barber2023}).
\end{remark}

{To calibrate the set, conformal prediction relies on a {multivariate} \emph{nonconformity score}, i.e., a {measurable function} 
\begin{equation}
{\Sn:\Zcal\times\R^{\nx}\to\R_{\ge0}},\qquad (\z,\vd)\mapsto \Sn(\z,\vd),
\label{eq:score_generic}
\end{equation}
fixed before calibration, which quantifies how anomalous a candidate disturbance vector $\vd$ is at the location $\z$. The empirical scores on the calibration set are
\begin{equation}
s^{(i)}\doteq \Sn(\zb{i},\bar\vd^{(i)}),\qquad i=1,\ldots,\Ncal.
\label{eq:cal_scores}%
\end{equation}}%
Given a suitable nonconformity score {$\Sn$}, conformal prediction provides a data-driven procedure for selecting the size of the set $\Dset(\z)$. Specifically, we construct $\Dset(\z)$ in the form
\begin{equation}
\Dset(\z)\doteq\left\{\vd\in\R^{\nx}\,\middle|\,\Sn(\z,\vd)\leq \qeps\right\},
\label{eq:conformal_D}
\end{equation}
where {$\qeps$ is the conformal threshold computed from the calibration scores \eqref{eq:cal_scores} as the $\eta_\varepsilon$-th smallest score, with
\begin{equation}
\eta_\varepsilon \doteq \left\lceil (\Ncal+1)(1-\varepsilon)\right\rceil,
\label{eq:eta_def}
\end{equation}
and the convention $\qeps=+\infty$ (i.e., $\Dset(\z)=\R^n$) if $\eta_\varepsilon>\Ncal$.}
In this way, the uncertainty set is not selected a priori, but is calibrated directly from data so as to satisfy the prescribed probabilistic coverage level{, as formalized next}. Throughout, we assume $\Ncal\ge\lceil 1/\varepsilon\rceil-1$, which is necessary and sufficient for $\eta_\varepsilon\le\Ncal$ and hence for a finite threshold.

{\begin{proposition}[Finite-sample marginal coverage]
\label{prop:coverage}
Let the nominal parameter estimate $\that$ and any further ingredient of the
score $\Sn$ in \eqref{eq:score_generic} be constructed from the training
dataset $\Dtr$ only. Assume that, conditionally on $\Dtr$, the calibration
transitions $\{(\zb{i},\xsucc{i})\}_{i=1}^{\Ncal}$ and the new transition
$(\zb{\Ncal+1},\xsucc{\Ncal+1})$ are exchangeable. Let $\qeps$ be defined as
in \eqref{eq:conformal_D}--\eqref{eq:eta_def}. Then, the disturbance
prediction set $\Dset(\z)$ in \eqref{eq:conformal_D} satisfies
\begin{equation}
\Pr\!\left\{\bar\vd^{(\Ncal+1)}\in\Dset\bigl(\zb{\Ncal+1}\bigr)\right\}\geq 1-\varepsilon,
\label{eq:marginal_coverage}
\end{equation}
with ${\bar\vd^{(\Ncal+1)}}=\xsucc{\Ncal+1}-f(\zb{\Ncal+1},\that)$. Moreover, if the
scores $s^{(1)},\ldots,s^{(\Ncal+1)}$ are almost surely distinct, then
\begin{equation}
\Pr\!\left\{{\bar\vd^{(\Ncal+1)}}\in\Dset\bigl(\zb{\Ncal+1}\bigr)\right\}\leq 1-\varepsilon+\frac{1}{\Ncal+1}.
\label{eq:upper_coverage}
\end{equation}
\end{proposition}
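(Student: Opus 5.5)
The plan is the standard split-conformal rank argument, carried out conditionally on $\Dtr$ and then integrated out. Fix $\Dtr$. Then $\that$ and every other ingredient of $\Sn$ are deterministic, so the map
\[
(\z,\vx^+)\mapsto \Sn\bigl(\z,\vx^+-f(\z,\that)\bigr)
\]
is a fixed measurable function. Applying it to the exchangeable transitions $(\zb{i},\xsucc{i})$, $i=1,\ldots,\Ncal+1$, gives scores $s^{(1)},\ldots,s^{(\Ncal+1)}$ that are themselves exchangeable conditionally on $\Dtr$. This transfer of exchangeability through a data-independent map is the only place where the training/calibration split is used, and I will state it explicitly.

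Next I reduce coverage to a rank event. By \eqref{eq:conformal_D}, $\bar\vd^{(\Ncal+1)}\in\Dset(\zb{\Ncal+1})$ holds iff $s^{(\Ncal+1)}\le\qeps$, where $\qeps$ is the $\eta_\varepsilon$-th smallest of $s^{(1)},\ldots,s^{(\Ncal)}$. A short lemma shows that this is equivalent to $s^{(\Ncal+1)}$ being at most the $\eta_\varepsilon$-th smallest of all $\Ncal+1$ scores. To see this, note that $s^{(\Ncal+1)}\le\qeps$ iff fewer than $\eta_\varepsilon$ of the calibration scores are strictly smaller than $s^{(\Ncal+1)}$. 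Inserting $s^{(\Ncal+1)}$ into the list does not change that count.

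For the lower bound, define $E_j$ as the event that $s^{(j)}$ is at most the $\eta_\varepsilon$-th smallest of the full list. With ties, at least $\eta_\varepsilon$ indices $j$ satisfy $E_j$ deterministically, so $\sum_j \mathbf 1_{E_j}\ge\eta_\varepsilon$. By exchangeability, $\Pr\{E_j\mid\Dtr\}$ is the same for every $j$. Taking expectations therefore gives
\[
\Pr\{E_{\Ncal+1}\mid\Dtr\}\ge \frac{\eta_\varepsilon}{\Ncal+1}\ge 1-\varepsilon,
\]
where the last inequality uses \eqref{eq:eta_def}. The standing assumption $\Ncal\ge\lceil1/\varepsilon\rceil-1$ ensures $\eta_\varepsilon\le\Ncal$, so $\qeps$ is finite; if it were infinite, the bound would hold trivially. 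Taking expectation over $\Dtr$ yields \eqref{eq:marginal_coverage}.

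For the upper bound under almost surely distinct scores, exactly $\eta_\varepsilon$ indices satisfy $E_j$, so the same averaging argument gives equality $\Pr\{E_{\Ncal+1}\mid\Dtr\}=\eta_\varepsilon/(\Ncal+1)$. Combining this with $\lceil a\rceil<a+1$ gives
\[
\frac{\eta_\varepsilon}{\Ncal+1}<1-\varepsilon+\frac{1}{\Ncal+1},
\]
which proves \eqref{eq:upper_coverage}. The main obstacle is purely bookkeeping: handling ties correctly in the rank-equivalence lemma, and being careful that exchangeability is conditional on $\Dtr$, so the final step must integrate over $\Dtr$ rather than assume unconditional exchangeability.
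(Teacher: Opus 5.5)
Your proposal is correct and follows the paper's proof. Both condition on $\Dtr$ so that the score becomes a fixed measurable map and the scores inherit exchangeability, then obtain both bounds from the standard split-conformal rank argument. The only difference is that the paper cites the lower and upper coverage bounds of Lei et al.\ (Thm.~2.2), whereas you derive that rank argument explicitly, including the tie handling and the finite-threshold case.
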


\begin{proof}
Conditioning on $\Dtr$, all ingredients of $\Sn$ are fixed, so $\Sn$ is a common
measurable map and the scores $s^{(1)},\dots,s^{(\Ncal+1)}$ inherit the
exchangeability of the transitions. Equations
\eqref{eq:marginal_coverage}--\eqref{eq:upper_coverage} are then exactly the
lower and upper split-conformal coverage bounds of
\cite[Thm.~2.2]{Lei2018} (see also \cite{Vovk2005,Papadopoulos2002}), applied to
the set \eqref{eq:conformal_D}.
\end{proof}

{Proposition~\ref{prop:coverage} holds for \emph{any} score of the form \eqref{eq:score_generic} fixed before calibration, and in any dimension $\nx$. The accuracy of the model ingredients affects only the size and shape of $\Dset(\z)$, never its validity. The remainder of the paper is devoted to the design of a score that shapes $\Dset(\z)$ according to the structural requirements identified in Section~\ref{sec:prob}.}

\subsection{Coordinate-wise calibration and Bonferroni coverage}
\label{sec:bonf}
Designing a meaningful multivariate nonconformity score
$\Sn:\Zcal\times\R^{\nx}\to\R_{\ge0}$ is not always straightforward: although
it directly yields a valid conformal set, its geometry is determined implicitly
by the score and may be difficult to control or exploit computationally. Since
the control and estimation recursions of Section~\ref{sec:prob} are especially
simple for axis-aligned disturbance boxes, in the following we adopt a standard coordinate-wise
construction: the vector set $\Dset(\z)$ is assembled as the Cartesian product
of scalar conformal sets, one for each disturbance component.

Denote the \(j\)-th component of the equivalent disturbance as
\begin{equation}
d_j
\doteq 
x_j^+ - f_j(\z,\that)
\label{eq:dj_def}
\end{equation}
with $f_j$ the $j$-th component of $f$,
and let
\(S:\Zcal\times\R\to\R_{\ge0}\) be a single-coordinate nonconformity score fixed before
calibration, to be applied independently on all coordinates. For each component \(j=1,\ldots,\nx\), and for a prescribed
per-component level \(\varepsilon_j\in(0,1)\), the same scalar score is
calibrated on the component residuals
\(\{\bar d^{(i)}_j\}_{i=1}^{\Ncal}\) at level $\varepsilon_j$, yielding a threshold
\(\hat q^{(j)}_{\varepsilon_j}\) and the scalar conformal set (interval)
\begin{equation}
\Dset_j(\z)\doteq
\bigl\{d_j\in\R\,\big|\,S(\z,d_j)\le\hat q^{(j)}_{\varepsilon_j}\bigr\}.
\label{eq:coord_set}
\end{equation}
The vector disturbance set is then assembled as the Cartesian product
\begin{equation}
\Dset(\z)\doteq
\Dset_1(\z)\times\cdots\times\Dset_{\nx}(\z),
\label{eq:box_set}
\end{equation}
which is an axis-aligned box in disturbance coordinates.
\begin{proposition}[Bonferroni joint coverage]
\label{prop:bonf}
Under the assumptions of Proposition~\ref{prop:coverage} applied to each
component, if $\sum_{j=1}^{\nx}\varepsilon_j=\varepsilon$, then the box
\eqref{eq:box_set} satisfies the joint coverage guarantee \eqref{eq:dk_prob},
i.e., $\Pr\{\vd(\z)\in\Dset(\z)\}\ge1-\varepsilon$.
\end{proposition}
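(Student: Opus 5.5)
The plan is a union bound over coordinates, with each per-coordinate event handled by Proposition~\ref{prop:coverage}. Fix the training data $\Dtr$ and work conditionally on it throughout, so that $\that$ and every ingredient of the scalar score $S$ are deterministic. For each $j=1,\ldots,\nx$, define the coordinate score $\Sn_j(\z,\vd)\doteq S(\z,d_j)$. This is a measurable map $\Zcal\times\R^{\nx}\to\R_{\ge0}$ of the form \eqref{eq:score_generic}, fixed before calibration, because it is $S$ composed with the coordinate projection.

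\textbf{Step 1: per-coordinate coverage.} The conformal set built from $\Sn_j$ at level $\varepsilon_j$ is exactly the cylinder
\[
\{\vd\in\R^{\nx}\mid d_j\in\Dset_j(\z)\},
\]
with $\Dset_j$ as in \eqref{eq:coord_set}. Proposition~\ref{prop:coverage} then applies verbatim to this score: the scores $\Sn_j(\zb{i},\bar\vd^{(i)})$ are functions of exchangeable transitions through a common map, so they are exchangeable. This gives
\[
\Pr\{\bar d^{(\Ncal+1)}_j\notin\Dset_j(\zb{\Ncal+1})\}\le\varepsilon_j .
\]
Two points need checking here, and this is the only place requiring care. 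First, the thresholds $\hat q^{(j)}_{\varepsilon_j}$ are computed from the same calibration set for all $j$. This is harmless, since each per-coordinate statement is marginal and no independence across $j$ is needed. Second, the assumption $\Ncal\ge\lceil 1/\varepsilon_j\rceil-1$ may fail for some $j$. In that case the convention $\hat q^{(j)}_{\varepsilon_j}=+\infty$ makes $\Dset_j(\z)=\R$, and the miscoverage bound holds trivially.

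\textbf{Step 2: union bound.} By \eqref{eq:box_set}, $\vd\notin\Dset(\z)$ holds if and only if $d_j\notin\Dset_j(\z)$ for some $j$. Boole's inequality then gives
\[
\Pr\{\vd\notin\Dset(\z)\}\le\sum_{j=1}^{\nx}\Pr\{d_j\notin\Dset_j(\z)\}\le\sum_{j=1}^{\nx}\varepsilon_j=\varepsilon .
\]
This holds conditionally on $\Dtr$. Taking the expectation over $\Dtr$ preserves the bound and yields \eqref{eq:dk_prob}.

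I do not expect a genuine obstacle. The main subtlety is stating that Proposition~\ref{prop:coverage} is being applied to the lifted score $\Sn_j$ rather than to a new scalar theory, so that its hypotheses are inherited directly. It is also worth remarking that Bonferroni is conservative and that no upper coverage bound analogous to \eqref{eq:upper_coverage} is claimed.
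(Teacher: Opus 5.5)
Your proposal is correct and follows the paper's proof: a union bound over the coordinates, with Proposition~\ref{prop:coverage} giving per-coordinate miscoverage at most $\varepsilon_j$, summed to $\varepsilon$. Lifting the scalar score to $\Sn_j(\z,\vd)=S(\z,d_j)$ and handling the infinite-threshold case only spell out what the paper's one-line appeal to Proposition~\ref{prop:coverage} leaves implicit.
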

\begin{proof}
By the union bound,
$\Pr\{\vd(\z)\notin\Dset(\z)\}
=\Pr\{\exists\,j:\,d_j(\z)\notin\Dset_j(\z)\}
\le\sum_{j=1}^{\nx}\Pr\{d_j(\z)\notin\Dset_j(\z)\}$.
Proposition~\ref{prop:coverage}, applied to the scalar score, gives
$\Pr\{d_j(\z)\notin\Dset_j(\z)\}\le\varepsilon_j$ for each $j$. Summing yields
$\Pr\{\vd(\z)\notin\Dset(\z)\}\le\sum_j\varepsilon_j=\varepsilon$, which is
\eqref{eq:dk_prob}.
\end{proof}

The uniform allocation $\varepsilon_j=\varepsilon/\nx$ recovers the classical Bonferroni correction. Per-coordinate calibration combined with a union bound is the standard route to valid multivariate conformal sets
\cite{Messoudi2021,Diquigiovanni2022,Park2025}.
%\revnote{equivalently, \eqref{eq:box_set} is the level set of the max-score $\Sn(\z,\vd)=\max_{j}S(\z,d_j)$ when each $S(\z,d_j)$ is calibrated at $\varepsilon/\nx$. \CD{Forse toglierei quest'ultima frase...}}

\begin{remark}[Conservativeness versus geometry]
\label{rem:bonf}
The union bound is tight only when the coordinate miscoverage events are
disjoint; under positively correlated components it is conservative, and the
box \eqref{eq:box_set} then over-covers. Two remedies exist. Copula-based
calibration \cite{Messoudi2021} retains the rectangular geometry but replaces
the union bound with the estimated dependence structure of the coordinate
scores, tightening the box. Norm-based scores
(see Section~\ref{sec:md_conformal}) instead calibrate a single joint region
\cite{Messoudi2022,johnstone2021}, at the cost of a shape that is no longer a
box. We develop the coordinate-wise construction first because (i) it yields exactly the box geometry that the recursions of Section~\ref{sec:prob} propagate at negligible cost, and (ii) each factor $\Dset_j(\z)$ inherits the anchoring $0\in\Dset_j(\z)$ from the scalar design below, so that $0\in\Dset(\z)$ component-wise. The joint route is addressed in Section~\ref{sec:md_conformal} and both are compared numerically in Section~\ref{sec:numerics}.
\end{remark}

In view of \eqref{eq:box_set}, it suffices to design a single coordinate-wise score for a generic component of the equivalent disturbance. The design of such a scalar score is taken up in Sections~\ref{sec:directional}--\ref{sec:kernel}; a joint alternative that calibrates the whole vector without reducing to coordinates is developed in Section~\ref{sec:md_conformal}.

\section{Directional conformal disturbance set{s}}
\label{sec:directional}
Following the coordinate-wise reduction of Section~\ref{sec:bonf}, in this section and the next we design a nonconformity score acting on a
\emph{single (scalar) coordinate} of the equivalent disturbance; such a
coordinate-wise score is then applied component-wise and combined into the box
\eqref{eq:box_set}. To keep the notation light, we adopt the following
simplified convention: we drop the component index and write $d\in\R$ for a
generic coordinate of $\vd(\z)$ and $\dhat(\z):\Zcal\to\R$ for the corresponding
discrepancy estimate along that coordinate. With a slight abuse of notation,
$\bar x^{(i)+}$ and $f$ denote the associated scalar components, and
$\db{i}=\bar x^{(i)+}-f(\zb{i},\that)$ the corresponding scalar calibration
residual.

Section~\ref{sec:CP} introduced a generic conformal construction for the uncertain set $\Dset(\z)$. If one is able to obtain a good estimate of $d(\z)$, then it is possible to use the estimated discrepancy to improve the characterization of the conformal set $\Dset(\z)$. Let us assume that an identification procedure\footnote{Being the main focus of this work the characterization of the uncertainty set, the method by which \(\dhat\) is obtained is not further discussed in this paper. The interested reader may refer to \cite{Donati2026} and references therein.} has been applied and an estimate $(\that,\dhat(\z))$ is available, where $\that$ is the identified nominal parameter and $\dhat(\z)\,:\,\Zcal\rightarrow\R$ is an approximation of {the discrepancy $\Delta(\z)$, i.e., of the conditional mean of $d(\z)$}, both understood component-wise under the scalar convention above. Then, we can exploit the learned discrepancy $\dhat(\z)$ to characterize the nonconformity score and obtain an \textit{asymmetric} disturbance set that remains anchored at the nominal model {$f(\cdot,\that)$} while allocating more uncertainty in the direction suggested by $\dhat(\z)$. In this way, we ensure that the resulting disturbance set contains the origin, as required by {both deployment scenarios of Section~\ref{sec:prob}}.

\begin{remark}[Directional principle]
\label{rem:dir}
At a given state-input pair $\z$, the learned discrepancy $\dhat(\z)$ is here used as an estimate of the anticipated direction of the equivalent disturbance. When this directional information is \textit{informative}, a candidate disturbance $d$ aligned with $\dhat(\z)$ is expected to be less anomalous than one of equal magnitude pointing in the opposite direction. We encode this prior information by penalizing less the aligned disturbances and more the opposing ones. In this framework, a single sensitivity parameter $\rho\in(0,1)$ controls the strength of this two-sided effect, smoothly interpolating between a symmetric treatment ($\rho\to0$) and a strongly directional one ($\rho\to1$).
\end{remark}

Note that this principle requires only that $\dhat(\z)$ be computed before calibration. It does not require $\dhat(\z)$ to be statistically consistent or unbiased. If $\dhat(\z)$ is uninformative, the proposed score reduces to the usual symmetric score by letting $\rho\to 0$.

\subsection{Directional nonconformity score}

The directional principle of Remark~\ref{rem:dir} suggests that candidate {disturbances} whose sign agrees with the learned discrepancy $\dhat(\z)$ should be regarded as less surprising than equally large {disturbances} with opposite sign, which in turn should be regarded as more surprising. Accordingly, we seek a nonconformity score that assigns a smaller penalty to {disturbances} in the expected direction and a larger penalty to {disturbances} in the opposite direction.

To compare the magnitude of the learned discrepancy with the dimensionless constants appearing in the score, we first normalize $\dhat(\z)$ by a reference output scale $\delta_{\mathrm{ref}}>0$, {computed from the training data $\Dtr$ only and hence} fixed before calibration, and define the \emph{normalized discrepancy}
\begin{equation}
{\dtil(\z)}=\frac{\dhat(\z)}{\delta_{\mathrm{ref}}}.
\label{eq:ddef}
\end{equation}
%Typical choices for $\delta_{\mathrm{ref}}$ \hl{include a robust scale of the training residuals, such as their median absolute deviation, or a physically meaningful output scale.}
{A natural choice is the median absolute deviation of the training
residuals, which is insensitive to outliers; alternatively,
$\delta_{\mathrm{ref}}$ can be set to a physically meaningful output scale
when one is available.}

{\begin{definition}[Directional nonconformity score]
\label{def:sdir}
Let $\rho\in(0,1)$ be fixed before calibration. The directional score
$S_{\mathrm{dir}}:\Zcal\times\R\to\R_{\ge0}$ is defined as
\begin{equation}
S_{\mathrm{dir}}(\z,d)=
\frac{|d|}{(1-\rho)+\rho\max\bigl\{\sign(d)\,\dtil(\z),\,0\bigr\}}.
\label{eq:sdir}
\end{equation}
On the calibration set, the empirical scores \eqref{eq:cal_scores} are
$s^{(i)}=S_{\mathrm{dir}}(\zb{i},\db{i})$, with $\db{i}$ the corresponding scalar coordinate
residual.
\end{definition}}

Note that the denominator lies in $[1-\rho,\,1-\rho+\rho|\dtil(\z)|]$ and is therefore
strictly positive, so $S_{\mathrm{dir}}$ is well defined and, for fixed $\z$,
continuous and strictly increasing in $|d|$ on each half-line. The convention
$\sign(0)=0$ gives $S_{\mathrm{dir}}(\z,0)=0$.

{The score evaluates whether the candidate disturbance aligns with the learned discrepancy direction. When $\sign(d)$ agrees with the sign of $\dtil(\z)$, the term inside the maximum is strictly positive and inflates the denominator, scaling the score down: deviations in the direction anticipated by $\dtil(\z)$ are expected and penalized less. Conversely, when the candidate disturbance contradicts $\dtil(\z)$, the maximum evaluates to zero and the denominator collapses to its minimum value $(1-\rho)$. 
%Because $(1-\rho)<1$, dividing by this term amplifies the score: candidate disturbances that directly oppose the data-driven correction are treated as highly anomalous and incur a severe statistical penalty. The parameter $\rho$ dictates the intensity of this directional bias: as 
Since $(1-\rho)<1$, the score is amplified and disturbances opposing the data-driven correction are deemed highly anomalous.
As $\rho\to0$ the denominator tends to one and \eqref{eq:sdir} recovers the standard symmetric split-conformal score
\begin{equation}
S_{\mathrm{sym}}(\z,d)\doteq|d|=|x^+ - f(\z,\that)|,
\label{eq:ssplit}
\end{equation}
whereas as $\rho\to1$ the penalization becomes highly asymmetric and discrepancy-guided.}

\subsection{{Geometry and admissibility of the directional set}}
After computing the empirical nonconformity scores $s^{(i)}$ for all {transitions} in the calibration dataset $\Dcal$, together with the corresponding {threshold} $\qeps$ {in \eqref{eq:conformal_D}--\eqref{eq:eta_def}}, the marginal conformal prediction set $\Dset(\z)$ defined in \eqref{eq:conformal_D} can be constructed for a new {state-input pair $\z$} as described in the following proposition.

\begin{proposition}[Geometry of prediction set]
\label{prop:geometry}
Given the directional score defined in \eqref{eq:sdir} and a calibrated threshold {$0\le\qeps<\infty$}, the conformal prediction set $\Dset(\z)$ evaluates to the closed, asymmetric interval
\begin{equation}
    \begin{aligned}
    \Dset(\z) = \Big[ &- \qeps\big((1-\rho) + \rho \max\{-\dtil(\z), 0\}\big), \\
                           &\qeps\big((1-\rho) + \rho \max\{\dtil(\z), 0\}\big) \Big].
    \end{aligned}
    \label{eq:band_final}
\end{equation}
{Under the convention of \eqref{eq:conformal_D}, $\qeps=+\infty$ yields $\Dset(\z)=\R$.}
\end{proposition}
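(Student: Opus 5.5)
The plan is to unfold the definition \eqref{eq:conformal_D} with the score \eqref{eq:sdir} and solve the inequality $S_{\mathrm{dir}}(\z,d)\le\qeps$ separately on the three pieces $d>0$, $d<0$ and $d=0$, into which the sign function in \eqref{eq:sdir} splits the real line. Throughout, $\z$ is fixed, so $\dtil(\z)$ is a fixed real number. We also use the observation after Definition~\ref{def:sdir} that the denominator
\[
D(\z,d)\doteq(1-\rho)+\rho\max\{\sign(d)\dtil(\z),0\}
\]
is at least $1-\rho>0$. Consequently, multiplying the inequality by $D$ preserves its direction.

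\emph{Case $d>0$.} Here $\sign(d)=1$, so the denominator equals $D_+\doteq(1-\rho)+\rho\max\{\dtil(\z),0\}$ and does not depend on $d$. The condition $d/D_+\le\qeps$ is therefore equivalent to $0<d\le\qeps D_+$.

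\emph{Case $d<0$.} Here $\sign(d)=-1$, so the denominator equals $D_-\doteq(1-\rho)+\rho\max\{-\dtil(\z),0\}$. The condition $-d/D_-\le\qeps$ is equivalent to $-\qeps D_-\le d<0$.

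\emph{Case $d=0$.} With the convention $\sign(0)=0$ we have $S_{\mathrm{dir}}(\z,0)=0\le\qeps$, since $\qeps\ge0$. Hence the origin always belongs to the set.

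Taking the union of the three pieces gives $[-\qeps D_-,0)\cup\{0\}\cup(0,\qeps D_+]=[-\qeps D_-,\qeps D_+]$, which is exactly \eqref{eq:band_final}. The interval is closed because both endpoints are attained: the inequalities above are non-strict and each denominator is constant on its half-line. When $\qeps=0$, both endpoints are $0$ and the set degenerates to $\{0\}$, which is consistent with the formula. The asymmetry remark follows by noting that at most one of $\max\{\pm\dtil(\z),0\}$ is nonzero. Hence the endpoint on the side of $\sign(\dtil(\z))$ is stretched by $\rho|\dtil(\z)|$, while the other endpoint stays at $\qeps(1-\rho)$. Finally, when $\qeps=+\infty$, the definition \eqref{eq:conformal_D} imposes no constraint, so $\Dset(\z)=\R$.

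The only delicate point is the bookkeeping at $d=0$ and the use of positivity of $D$. The sign convention must be handled correctly, because without $\sign(0)=0$ the origin would be treated by a different denominator; it still satisfies the inequality, but the case must be checked separately. Positivity of $D$ also has to be invoked explicitly, since it is what justifies clearing the denominator. Beyond that, the argument is a direct computation.
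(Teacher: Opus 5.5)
Your proof is correct and follows essentially the same route as the paper: clear the strictly positive denominator, then split on the sign of $d$ to get the upper endpoint for $d>0$ and the lower endpoint for $d<0$, with $d=0$ trivially admissible. Your explicit handling of the degenerate case $\qeps=0$ is a small addition; one minor slip is that the one-sided stretch in absolute terms is $\qeps\rho|\dtil(\z)|$, not $\rho|\dtil(\z)|$.
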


\begin{proof}
By \eqref{eq:conformal_D} and \eqref{eq:sdir}, and since the denominator is
positive, $d\in\Dset(\z)$ if and only if
$|d|\le\qeps\bigl((1-\rho)+\rho\max\{\sign(d)\,\dtil(\z),0\}\bigr)$. Splitting on
the sign of $d$ (with $\sign(0)$ immaterial, since both bounds are then satisfied) gives, for $d\ge0$, the upper bound
$d\le\qeps((1-\rho)+\rho\max\{\dtil(\z),0\})$ and, for $d\le0$, the lower bound
$d\ge-\qeps((1-\rho)+\rho\max\{-\dtil(\z),0\})$, i.e.\ \eqref{eq:band_final}.
\end{proof}

% \begin{proof}
% {By \eqref{eq:conformal_D}, $d\in\Dset(\z)$ if and only if
% \begin{equation*}
% |d| \le \qeps \Bigl( (1-\rho) + \rho \max\{ \sign(d)\, \dtil(\z), 0 \} \Bigr).
% \end{equation*}
% We distinguish the two possible signs of the candidate disturbance.
% \begin{enumerate}
%     \item $d \ge 0$: then $|d|=d$ and $\sign(d)=1$, so the condition reads
%     \begin{equation*}
%         d \leq \qeps \Bigl( (1-\rho) + \rho \max\{\dtil(\z), 0\} \Bigr),
%     \end{equation*}
%     which establishes the upper bound of $\Dset(\z)$.
%     \item $d < 0$: then $|d|=-d$ and $\sign(d)=-1$, so the condition reads
%     \begin{equation*}
%        d \ge - \qeps \Bigl( (1-\rho) + \rho \max\{-\dtil(\z), 0\} \Bigr),
%     \end{equation*}
%     which establishes the lower bound of $\Dset(\z)$.
% \end{enumerate}
% The case $d=0$ satisfies both bounds trivially. Combining the two inequalities gives the closed interval in \eqref{eq:band_final}, which concludes the proof.}
% \end{proof}

Proposition~\ref{prop:geometry} formalizes how the learned discrepancy governs the spatial allocation of the uncertainty budget. The structural asymmetry of the interval is entirely dictated by the sign and magnitude of $\dtil(\z)$. In regions where the correction indicates that the nominal model systematically underestimates the {successor state} ($\dtil(\z)>0$), the term $\max\{\dtil(\z),0\}$ activates and the upper side expands by an amount proportional to $\qeps\rho\,\dtil(\z)$. The lower side remains constrained to the tight margin $\qeps(1-\rho)$, reflecting high confidence that the error will not manifest in the opposite direction. Conversely, when the nominal model overestimates ($\dtil(\z)<0$), the lower side stretches and the upper side remains tight. In the ideal case where the nominal prior captures the dynamics ($\dtil(\z)\approx0$), both maxima vanish and the interval
collapses to the symmetric band of width $2\qeps(1-\rho)$ centered at the origin. {Note that $\Dset(\z)$ lives in disturbance (error) coordinates. The corresponding set in state coordinates is the anchored band $f(\z,\that)\oplus\Dset(\z)$, centered on the nominal prediction.} Therefore, the directional scheme achieves coverage not by imposing a globally conservative margin, but by inflating the bands only in the directions and regions where the learned correction implies a deficiency in the physical prior.

{Hence, the following Corollary shows that $\Dset(\z)$ fulfills by construction the requirements of Section~\ref{sec:prob}.}

{\begin{corollary}[Admissibility of the directional set]
\label{cor:zero}
For each $\z\in\Zcal$, $\rho\in(0,1)$, and finite $\qeps>0$, $\Dset(\z)$ in \eqref{eq:band_final} is a compact, convex set containing the origin in its interior.
%, i.e., $\Dset(\z)$ is a neighborhood of the origin with radius at least $\qeps(1-\rho)$.
More precisely, $\mathbb B_p\bigl(\qeps(1-\rho)\bigr)\subseteq\Dset(\z)$ for every $p$, i.e., $\Dset(\z)$ is physics-consistent with margin $\qeps(1-\rho)$.
\end{corollary}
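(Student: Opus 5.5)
The proof will rest entirely on the closed-form description of $\Dset(\z)$ in Proposition~\ref{prop:geometry}. For finite $\qeps>0$ and $\rho\in(0,1)$, that result says $\Dset(\z)=[-a(\z),\,b(\z)]$, where
\[
a(\z)\doteq\qeps\bigl((1-\rho)+\rho\max\{-\dtil(\z),0\}\bigr),\qquad
b(\z)\doteq\qeps\bigl((1-\rho)+\rho\max\{\dtil(\z),0\}\bigr).
\]
The plan has three steps: show both endpoints are finite real numbers, bound them from below by the common constant $\qeps(1-\rho)$, and read off the geometric conclusions from this interval representation.

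\emph{Finiteness and the lower bound.} The normalized discrepancy $\dtil(\z)=\dhat(\z)/\delta_{\mathrm{ref}}$ is a real number, since $\dhat:\Zcal\to\R$ and $\delta_{\mathrm{ref}}>0$. Hence $a(\z)$ and $b(\z)$ are finite. Because $\rho>0$ and both maxima are nonnegative, we get $a(\z)\ge\qeps(1-\rho)$ and $b(\z)\ge\qeps(1-\rho)$. Finally, $\qeps>0$ and $\rho<1$ give $\qeps(1-\rho)>0$.

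\emph{Compactness and convexity.} $\Dset(\z)$ is a closed interval with finite endpoints satisfying $-a(\z)<0<b(\z)$. A closed bounded interval is compact by Heine--Borel, and every interval in $\R$ is convex. This places $\Dset(\z)$ in $\Kc(\R)$, and the coordinate factors $\Dset_j(\z)$ of \eqref{eq:box_set} lie in $\Kc(\R)$ for the same reason.

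\emph{Interior ball and physics consistency.} In the scalar setting, every $p$-norm reduces to the absolute value, so $\mathbb B_p(r)=[-r,r]$ for all $p$. Using the lower bounds on both endpoints,
\[
[-\qeps(1-\rho),\,\qeps(1-\rho)]\subseteq[-a(\z),\,b(\z)]=\Dset(\z).
\]
Since this ball has strictly positive radius, the origin lies in $\operatorname{int}\Dset(\z)$. This is exactly the physics-consistency condition of Section~\ref{sec:ph_cons}, with margin $r=\qeps(1-\rho)$.

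There is no real obstacle here. The only step needing care is to make the role of each hypothesis explicit: $\qeps>0$ (rather than $\qeps\ge0$) and $\rho<1$ are what make the margin strictly positive, and finiteness of $\qeps$ is what makes the set bounded. If the statement is also meant for the assembled box \eqref{eq:box_set}, I would add one remark. The product of the factors contains $\prod_j[-r_j,r_j]\supseteq\mathbb B_\infty(\min_j r_j)$, and therefore it contains $\mathbb B_p(\min_j r_j)$ for every $p$, since $\mathbb B_p(r)\subseteq\mathbb B_\infty(r)$. This extends the conclusion component-wise to the vector set.
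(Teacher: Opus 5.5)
Your proof is correct and follows the paper's own argument. Both work from the explicit interval of Proposition~\ref{prop:geometry}, use the nonnegativity of the maxima to bound the endpoints by $\mp\qeps(1-\rho)$, and read off compactness, convexity and the interior ball $[-\qeps(1-\rho),\qeps(1-\rho)]$. Your additional remark extending the margin to the assembled box via $\mathbb B_p(r)\subseteq\mathbb B_\infty(r)$ is also correct, though the scalar statement does not require it.
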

\begin{proof}
Let $L$ and $U$ denote the lower and upper endpoints of $\Dset(\z)$ in \eqref{eq:band_final}. Since $\max\{\pm\dtil(\z),0\}\ge0$, $\rho\in(0,1)$, and $\qeps>0$, the endpoints satisfy
\begin{equation*}
L\le-\qeps(1-\rho)< 0<\qeps(1-\rho)\le U .
\end{equation*}
In particular, $L$ and $U$ are finite, so $\Dset(\z)=[L,U]$ is a compact, convex set. Moreover, 
$$[-\qeps(1-\rho),\qeps(1-\rho)]\subseteq \Dset(\z)$$
%which implies $0\in\Dset(\z)$ and, in particular, that $\Dset(\z)$ is a neighborhood of the origin of radius at least $\qeps(1-\rho)$. 
and $\qeps(1-\rho)>0$, so $\vzero\in\operatorname{int}\Dset(\z)$ and $\Dset(\z)$ contains the open ball of radius $\qeps(1-\rho)$ centered at the origin.
This concludes the proof.
\end{proof}}

{Finally, since the directional score \eqref{eq:sdir} is a fixed measurable function of $(\z,d)$ determined by $\Dtr$ alone (through $\that$, $\dhat$, $\delta_{\mathrm{ref}}$) and by the pre-fixed constant $\rho$, Proposition~\ref{prop:coverage} applies directly: the directional set \eqref{eq:band_final} satisfies the finite-sample marginal coverage guarantee \eqref{eq:marginal_coverage}, together with the tightness bound \eqref{eq:upper_coverage} when the scores are almost surely distinct.}

In the following section, we aim at quantifying the \emph{efficiency} of the novel directional coverage set and compare it with the symmetric bound obtained from the symmetric score \eqref{eq:ssplit}.

\subsection{Prediction interval efficiency}
{
Let $\qdir$ and $\qsym$ denote the thresholds calibrated on the \emph{same} calibration set as in \eqref{eq:conformal_D}--\eqref{eq:eta_def}, considering the associated scores $S_{\mathrm{dir}}(\zb{i},\db{i})$ and $S_{\mathrm{sym}}(\zb{i},\db{i})$ defined in \eqref{eq:sdir} and \eqref{eq:ssplit}, respectively. Let $W_{\mathrm{dir}}(\z)$ and $W_{\mathrm{sym}}$ denote the widths of the corresponding prediction intervals. Note that both baselines are attached at the nominal model and both satisfy the coverage guarantee of Proposition~\ref{prop:coverage}. The efficiency comparison stated in the following is therefore between two valid sets.

Preparatory to the main result of this Section, we formally quantify the prediction interval width associated with the two scores.}
{
\begin{property}[Prediction interval width]\label{lem:width}
For every $\z\in\Zcal$ and finite $\qdir>0$, the conformal prediction set \eqref{eq:band_final} has length
\begin{equation}
    W_\mathrm{dir}(\z) = \qdir\left(2(1-\rho) + \rho|\dtil(\z)|\right),
    \label{eq:width}
\end{equation}
while the symmetric baseline has constant length
\begin{equation}
W_\mathrm{sym}=2\qsym.
\label{eq:bound_sym}
\end{equation}
\end{property}
\begin{proof}
    From \eqref{eq:band_final}, by subtracting the endpoints and using $\max\{\dtil(\z),0\}+ \max\{-\dtil(\z),0\}=|\dtil(\z)|$ we recover \eqref{eq:width}. 
    %Moreover, at $\rho=0$ the directional score \eqref{eq:sdir} reduces to the split score \eqref{eq:ssplit}, so $\qdir=\qsym$ and the interval \eqref{eq:band_final} collapses to the symmetric set $[-\qsym,\qsym]$, of width $W_{\mathrm{sym}}=2\qsym$.
    Moreover, substituting $S_{\mathrm{sym}}$ in \eqref{eq:conformal_D} gives $\Dset_{\mathrm{sym}}(\z)=\{d:|d|\le\qsym\}=[-\qsym,\qsym]$ for every $\z$, whence \eqref{eq:bound_sym}.
    This concludes the proof. 
\end{proof}
}

{Next, we bound the directional quantile as a preliminary step to quantify the efficiency of the proposed approach with respect to the standard split conformal.}

{
\begin{lemma}[Quantile bounds]\label{lem:quantile_ineq}
    Define the maximal aligned discrepancy over the calibration set,
    \begin{equation}
        M \doteq \max_{i=1,\ldots,\Ncal} \max\bigl\{\sign(\db{i})\,\dtil(\zb{i}),\,0\bigr\}.
        \label{eq:Mbar}
    \end{equation}
    Then
    \begin{equation}
        \frac{\qsym}{1-\rho+\rho M}\;\le\;\qdir\;\le\;\frac{\qsym}{1-\rho}
        \label{eq:quantile_ineq}
    \end{equation}
\end{lemma}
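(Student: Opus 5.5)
The plan is to compare the two families of calibration scores sample by sample and then pass the pointwise inequalities through the order statistic that defines the threshold. For each calibration index $i$ write $a_i\doteq\max\{\sign(\db{i})\,\dtil(\zb{i}),0\}\in[0,M]$. By \eqref{eq:sdir} and \eqref{eq:ssplit}, the directional and symmetric scores satisfy
\begin{equation*}
s^{(i)}_{\mathrm{dir}}=\frac{s^{(i)}_{\mathrm{sym}}}{1-\rho+\rho a_i},\qquad s^{(i)}_{\mathrm{sym}}=|\db{i}|.
\end{equation*}
The denominator $1-\rho+\rho a_i$ is nondecreasing in $a_i$, so it lies in $[1-\rho,\,1-\rho+\rho M]$. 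This gives the two-sided pointwise sandwich
\begin{equation*}
\frac{s^{(i)}_{\mathrm{sym}}}{1-\rho+\rho M}\le s^{(i)}_{\mathrm{dir}}\le\frac{s^{(i)}_{\mathrm{sym}}}{1-\rho},\qquad i=1,\ldots,\Ncal.
\end{equation*}

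Next, I will use the monotonicity of order statistics. Suppose two score vectors satisfy $u_i\le v_i$ for every $i$. Then the $k$-th smallest element of $(u_i)$ is at most the $k$-th smallest element of $(v_i)$. To see this, the $k$ indices carrying the $k$ smallest $v$-values also carry $u$-values that are each at most those $v$-values. Hence at least $k$ of the $u_i$ lie below $v_{(k)}$, which forces $u_{(k)}\le v_{(k)}$.

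A second elementary fact is that multiplying all scores by a constant $c>0$ multiplies every order statistic by $c$. I will apply this with $c=1/(1-\rho+\rho M)$ and with $c=1/(1-\rho)$. Both $\qdir$ and $\qsym$ are the $\eta_\varepsilon$-th smallest elements of their respective calibration scores, with the same $\eta_\varepsilon\le\Ncal$ under the standing assumption. Combining the sandwich with these two facts at $k=\eta_\varepsilon$ yields exactly \eqref{eq:quantile_ineq}.

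I do not expect a genuine obstacle. The only points needing care are the following three.
\begin{enumerate}
\item The order-statistic comparison must be stated correctly even though the two score vectors are sorted by different permutations; this is why it is phrased through the counting argument above rather than index by index.
\item The sandwich uses the global maximum $M$ taken over the calibration set only. It must not involve the test point, so that the bound involves only calibration quantities.
\item The case $\eta_\varepsilon>\Ncal$, where both thresholds are $+\infty$, is excluded by the standing assumption. If it were allowed, the inequality would still hold trivially in the extended reals.
\end{enumerate}
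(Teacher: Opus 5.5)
Your proposal is correct and follows the paper's proof essentially step for step. It sandwiches each directional score between the symmetric score divided by $1-\rho+\rho M$ and by $1-\rho$ (using $D^{(i)}\in[1-\rho,1-\rho+\rho M]$), then applies monotonicity of order statistics under entrywise domination together with positive rescaling at index $\eta_\varepsilon$; your counting argument for that monotonicity simply makes explicit a step the paper only asserts.
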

\begin{proof}
    For each $i$, the directional score reads $S_{\mathrm{dir}}(\zb{i},\db{i})=|\db{i}|/D^{(i)}$ with 
    \begin{equation}
    D^{(i)}\doteq{1-\rho+\rho\max\{\sign(\db{i})\dtil(\zb{i}),0\}},
    \label{eq:den}
    \end{equation}
    and, from \eqref{eq:Mbar}, $D^{(i)}\in[1-\rho,1-\rho+\rho M]$. Hence, for each tuple $i$ in the calibration set, we get 
    $$
    \frac{|\db{i}|}{1-\rho+\rho M}\;\le\; S_{\mathrm{dir}}(\zb{i},\db{i}) \;\le\;\frac{|\db{i}|}{1-\rho}.
    $$
    The outer terms are the symmetric scores rescaled by the constants $1/(1-\rho+\rho M)$, and $1/(1-\rho)$, whose $\eta_\varepsilon$-th ordered values are the correspondingly rescaled $\qsym$. 
    %Since at least $\eta_\varepsilon$ symmetric scores do not exceed $\qsym$, and the corresponding directional scores are no larger than $\qsym/(1-\rho)$ by the displayed inequality, at least $\eta_\varepsilon$ directional scores lie below $\qsym/(1-\rho)$. Hence, $\qdir$ does not exceed the upper rescaled quantile. The same count on the lower array gives the reverse inequality, yielding \eqref{eq:quantile_ineq} and concluding the proof.
    Order statistics are monotone under entry-wise domination of two arrays. Hence the $\eta_\varepsilon$-th smallest directional score is bounded above by the $\eta_\varepsilon$-th smallest element of $\{|\db{i}|/(1-\rho)\}_{i=1}^{\Ncal}$, namely $\qsym/(1-\rho)$, and below by the $\eta_\varepsilon$-th smallest element of $\{|\db{i}|/(1-\rho+\rho M)\}_{i=1}^{\Ncal}$, namely $\qsym/(1-\rho+\rho M)$. This is exactly \eqref{eq:quantile_ineq} and concludes the proof.
\end{proof}
}

{Finally, we quantify the efficiency of the directional score when compared with the symmetric one in the following proposition.}

\begin{proposition}[Improvement region]
\label{prop:comparison}
{Assume 
$0<\hat q_{\varepsilon,\mathrm{dir}}<\infty$ and $0<\hat q_{\varepsilon,\mathrm{sym}}<\infty$.}
Define the \emph{calibration ratio}
\begin{equation}
\chiq \doteq \frac{\qsym}{\qdir},
\label{eq:quantile_ratio}
\end{equation}
and the \emph{improvement threshold}
\begin{equation}
\tau \doteq \frac{2}{\rho}\bigl(\chiq-1+\rho\bigr).
\label{eq:tau_def}
\end{equation}
Then $\tau\in[0,2M]$, with $M$ as in \eqref{eq:Mbar}, and for all $\z\in\Zcal$,
\begin{equation}
W_{\mathrm{dir}}(\z)\le W_{\mathrm{sym}}
\Longleftrightarrow
|\dtil(\z)|\le\tau .
\label{eq:improvement}
\end{equation}
\end{proposition}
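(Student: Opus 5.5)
The plan is to combine Property~\ref{lem:width} and Lemma~\ref{lem:quantile_ineq}. The equivalence is pure algebra on the two widths; the range of $\tau$ then follows from the two-sided quantile bounds. I would first prove the equivalence \eqref{eq:improvement}. By Property~\ref{lem:width}, $W_{\mathrm{dir}}(\z)\le W_{\mathrm{sym}}$ reads
\[
\qdir\bigl(2(1-\rho)+\rho|\dtil(\z)|\bigr)\le 2\qsym .
\]
Since $0<\qdir<\infty$ by assumption, dividing by $\qdir$ preserves the inequality and gives $2(1-\rho)+\rho|\dtil(\z)|\le 2\chiq$. Rearranging yields $\rho|\dtil(\z)|\le 2(\chiq-1+\rho)$. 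Because $\rho>0$, this is equivalent to $|\dtil(\z)|\le\tau$ with $\tau$ as in \eqref{eq:tau_def}. Every step is a reversible operation by a strictly positive constant, so the chain is a genuine equivalence for every $\z\in\Zcal$.

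Next I would establish $\tau\in[0,2M]$ by rewriting \eqref{eq:quantile_ineq} in terms of the calibration ratio. Dividing \eqref{eq:quantile_ineq} by $\qdir>0$ and inverting, using the positivity of all quantities, gives
\[
1-\rho\le\chiq\le 1-\rho+\rho M .
\]
Here the right inequality comes from the left bound of the lemma, namely $\qsym\le(1-\rho+\rho M)\qdir$, and the left inequality from the right bound, $\qsym\ge(1-\rho)\qdir$. Subtracting $1-\rho$ gives $0\le\chiq-1+\rho\le\rho M$. Multiplying by $2/\rho>0$ gives $0\le\tau\le 2M$.

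I expect no real obstacle; the only delicate points are bookkeeping. First, the correct direction of each bound in Lemma~\ref{lem:quantile_ineq} must be tracked when passing to the ratio $\chiq$, since it is easy to swap the upper and lower bounds. Second, the strict positivity assumptions on both thresholds and $\rho\in(0,1)$ are exactly what legitimize the divisions. I would also remark that $\tau\ge0$ makes the improvement region nonempty, since it always contains the set $\{\z:\dtil(\z)=0\}$. Finally, $\tau$ is computable from calibration data alone.
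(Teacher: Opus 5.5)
Your proof is correct and follows the paper's own argument: you divide the width inequality from Property~\ref{lem:width} by the positive quantities $\qdir$ and $\rho$ to get the equivalence, and you convert Lemma~\ref{lem:quantile_ineq} into $1-\rho\le\chiq\le1-\rho+\rho M$ (with the bound directions tracked correctly) to get $\tau\in[0,2M]$. One small caveat concerns your closing remark: the region $\{\z:|\dtil(\z)|\le\tau\}$ does contain $\{\z:\dtil(\z)=0\}$, but that set may itself be empty, so $\tau\ge0$ alone does not guarantee that the improvement region is nonempty.
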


\begin{proof}
    Consider the prediction interval widths for the directional and symmetric scores given by \eqref{eq:width} and \eqref{eq:bound_sym}, respectively, in Property~\ref{lem:width}. We seek a condition on $\dtil(\z)$ that guarantees
    $W_{\mathrm{dir}}(\z)\le W_{\mathrm{sym}}$, i.e.,
    $$
    \qdir(2(1-\rho)+\rho|\dtil(\z)|)\le2\qsym.
    $$
    Dividing both sides by $\rho\,\qdir>0$ gives
    $$2\frac{1-\rho}{\rho}+|\dtil(\z)| \le \frac{2}{\rho}\frac{\qsym}{\qdir}=\frac{2}{\rho}\chiq.$$
    Rearranging yields
    \[
    |\dtil(\z)|\le
    \frac{2}{\rho}\bigl(\chiq-1+\rho\bigr)=\tau,
    \]
    which proves~\eqref{eq:improvement}. It remains to bound $\tau$ relying on Lemma~\ref{lem:quantile_ineq}. Taking reciprocals in \eqref{eq:quantile_ineq} and multiplying by $\qsym$ gives
    \[
    1-\rho\le\chiq\le1-\rho+\rho M.
    \]
    Substituting these bounds into \eqref{eq:tau_def} yields
    $0\le\tau\le2M$. This concludes the proof.
\end{proof}

{The improvement region $\{\z:|\dtil(\z)|\le\tau\}$ defined in Proposition~\ref{prop:comparison} depends on the
calibrated thresholds through $\tau$. The following corollary identifies a
verifiable condition on the calibration data under which $\tau\ge2$ yields a guaranteed improvement region independent of the realized quantiles.}

{
\begin{corollary}[Guaranteed improvement under aligned discrepancy]\label{cor:alignment}
Suppose that, at every calibration transition with nonzero residual (i.e., $\db{i}\not = 0$), the learned discrepancy predicts the correct error direction and has magnitude at least $\delta_{\mathrm{ref}}$, i.e.,
\begin{equation}
\sign(\db{i})\,\dtil(\zb{i})\ge 1
\quad \text{for all } i \text{ with } \db{i}\neq0.
\label{eq:alignment}
\end{equation}
Then $\qdir\le\qsym$ and $\tau\ge2$. Consequently, 
\begin{equation}
W_{\mathrm{dir}}(\z)\le
W_{\mathrm{sym}} \quad \text{for all } \z \text{ with } |\dtil(\z)|\le2.
\label{eq:guaranteed_improv}
\end{equation}
\end{corollary}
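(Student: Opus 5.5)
The plan is to compare the two arrays of calibration scores entry-wise, much as in the proof of Lemma~\ref{lem:quantile_ineq}, but now using the alignment condition \eqref{eq:alignment} in place of the crude bound $D^{(i)}\ge 1-\rho$.

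\emph{Step 1: entry-wise domination.} For each $i$, write $S_{\mathrm{dir}}(\zb{i},\db{i})=|\db{i}|/D^{(i)}$ with $D^{(i)}$ as in \eqref{eq:den}, and split into two cases.
\begin{itemize}
\item If $\db{i}=0$, both scores vanish, by the convention $\sign(0)=0$.
\item If $\db{i}\neq0$, then \eqref{eq:alignment} gives $\max\{\sign(\db{i})\dtil(\zb{i}),0\}\ge1$. Hence $D^{(i)}\ge 1-\rho+\rho=1$, and so $S_{\mathrm{dir}}(\zb{i},\db{i})\le|\db{i}|=S_{\mathrm{sym}}(\zb{i},\db{i})$.
\end{itemize}
In both cases every directional calibration score is bounded by the corresponding symmetric one.

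\emph{Step 2: from scores to thresholds and to $\tau$.} Order statistics are monotone under entry-wise domination, which was already used in Lemma~\ref{lem:quantile_ineq}. So the $\eta_\varepsilon$-th smallest directional score does not exceed the $\eta_\varepsilon$-th smallest symmetric score, i.e.\ $\qdir\le\qsym$. Under the standing positivity and finiteness assumptions of Proposition~\ref{prop:comparison}, this gives $\chiq=\qsym/\qdir\ge1$. Substituting into \eqref{eq:tau_def} yields $\tau\ge\frac{2}{\rho}(1-1+\rho)=2$.

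\emph{Step 3: conclusion.} Invoke the equivalence \eqref{eq:improvement} of Proposition~\ref{prop:comparison}. Any $\z$ with $|\dtil(\z)|\le2$ also satisfies $|\dtil(\z)|\le\tau$, hence $W_{\mathrm{dir}}(\z)\le W_{\mathrm{sym}}$, which is \eqref{eq:guaranteed_improv}.

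\emph{Main obstacle.} The only delicate point is ensuring the hypotheses of Proposition~\ref{prop:comparison} hold, namely $0<\qdir,\qsym<\infty$. Finiteness follows from the standing assumption $\Ncal\ge\lceil1/\varepsilon\rceil-1$. Positivity of $\qdir$ is equivalent to positivity of $\qsym$, since the two scores have identical zero sets (each $D^{(i)}>0$). I would therefore state the corollary under the same standing assumption as Proposition~\ref{prop:comparison}. In the degenerate case $\qsym=0$, both widths vanish and the claim holds trivially.
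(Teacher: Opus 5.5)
Your proposal is correct and follows the paper's proof essentially step for step. The paper also uses $D^{(i)}\ge1$ under the alignment condition, with the trivial $\db{i}=0$ case, to get entry-wise domination $S_{\mathrm{dir}}\le S_{\mathrm{sym}}$; it then uses monotonicity of the $\eta_\varepsilon$-th order statistic to obtain $\qdir\le\qsym$, hence $\chiq\ge1$ and $\tau\ge2$, and concludes via Proposition~\ref{prop:comparison}. Your extra check of that proposition's hypotheses is a valid tightening that the paper leaves implicit: the two scores have identical zero sets, so $\qdir>0\iff\qsym>0$, and in the degenerate case $\qsym=0$ both widths vanish.
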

\begin{proof}
    Recalling \eqref{eq:den}, condition \eqref{eq:alignment} gives $D^{(i)}=(1-\rho)+\rho\,\sign(\db{i})\dtil(\zb{i})\ge(1-\rho)+\rho=1$ for every $i$ with $\db{i}\neq0$, hence $S_{\mathrm{dir}}(\zb{i},\db{i})\le|\db{i}|=S_{\mathrm{sym}}(\zb{i},\db{i})$.
    For $\db{i}=0$ both scores vanish and the inequality holds trivially.
    Since $S_{\mathrm{dir}}(\zb{i},\db{i})\le S_{\mathrm{sym}}(\zb{i},\db{i})$ for every $i$, at least $\eta_\varepsilon$ directional scores lie below the $\eta_\varepsilon$-th smallest symmetric score $\qsym$. Hence, their $\eta_\varepsilon$-th smallest satisfies $\qdir\le\qsym$.
    Thus, $\tau\ge\tfrac{2}{\rho}\bigl(1-(1-\rho)\bigr)=2$ by \eqref{eq:tau_def}, and \eqref{eq:guaranteed_improv} follows from Proposition~\ref{prop:comparison}.
\end{proof}
}

{
Three features of this analysis are worth stressing. First, all terms are computable: $\qsym$, $\qdir$ and, as a consequence, $\tau$ follow from calibration data, so the improvement region $\{\z:|\dtil(\z)|\le\tau\}$ can be identified a priori and reported together with the prediction band, letting the practitioner verify before deployment where the directional construction yields a narrower band. Second, since $\tau\ge0$, the directional band is never wider than the symmetric band where the nominal model is accurate ($\dtil\approx0$), and it is wider only in the region with $|\dtil(\z)|>\tau$, i.e., exactly where the extra one-sided width is required. This means that, at fixed coverage, width is moved from low- to high-discrepancy regions instead of added globally. Third, Corollary~\ref{cor:alignment} isolates the regime in which the directional score is guaranteed to help: the learned discrepancy must be both \emph{correctly signed} on the calibration residuals and \emph{significant} relative to the reference scale $\delta_{\mathrm{ref}}$, a condition verifiable on the calibration data. 
}

\begin{remark}[Validity versus efficiency]
The conformal guarantee of Proposition~\ref{prop:coverage} holds regardless of the accuracy of $\dtil(\z)$: the learned discrepancy influences only the {shape} of the band, never its validity. Its accuracy governs efficiency instead, quantified exactly by Proposition~\ref{prop:comparison}.
\end{remark}

% ---------------------------------------------------------

\section{{Kernel discrepancy models and reliability-adaptive weights}}
\label{sec:kernel}

The previous sections require only a learned discrepancy approximation $\dhat(\z)$. We now specialize to the case in which $\dhat(\z)$ is obtained from a kernel-based discrepancy model. This case is useful because the RKHS structure provides a local reliability certificate through the power function.

\subsection{RKHS discrepancy approximation}

As discussed in \cite{Donati2026}, a natural way to estimate the discrepancy function is through a reproducing kernel Hilbert space (RKHS). Let $\Hcal$ denote the RKHS induced by a positive-definite kernel $\kappa(\cdot,\cdot)$ and let {$\Dtr=\{(\zb{i},\bar x^{(i)+})\}_{i=1}^{\Ntr}$} be the training dataset. The nominal model parameters and the discrepancy function are jointly estimated by solving
\begin{align}
(\that,\dhat){=}\arg\min_{{\bm\theta}{\in\Theta},\,{\delta}\in\Hcal}\sum_{i=1}^{\Ntr}
\!\Big(\bar x^{(i)+}{-}f({\zb{i}},{\bm \theta})-& {\delta(\zb{i})} \Big)^2 \nonumber\\
+&\gamma\|{\delta}\|_{\Hcal}^2,
\label{eq:joint_identification}
\end{align}
where $\gamma>0$ balances data fitting against the complexity of the discrepancy model.

By the representer theorem, the optimal discrepancy admits the finite-dimensional expansion
\begin{equation}
\dhat(\z)=
\sum_{i=1}^{\Ntr}
\hat\omega_i\,\kappa(\z,\zb{i}),
\label{eq:representer}
\end{equation}
{and, defining} the kernel Gram matrix {$\mathbf{G}$ with entries}
$G_{ij}=\kappa(\zb{i},\zb{j})$, the stacked successor and model-output vectors
\begin{align*}
\mathbf{X}^+&=[\bar x^{(1)+},\ldots,\bar x^{(\Ntr)+}]^\top,\\
\mathbf{F}({\bm\theta})&=
[f(\zb{1},{\bm\theta}),\ldots,f(\zb{\Ntr},{\bm\theta})]^\top,%
\end{align*}
the coefficient vector satisfies
\begin{equation}
{\bm{\hat\omega}}=(\mathbf{G}+\gamma {\Id{\Ntr}})^{-1}\left({\mathbf{X}^+}-\mathbf{F}(\that)\right).
\label{eq:omega_hat}
\end{equation}
Therefore, the learned discrepancy is 
\begin{equation}
\dhat(\z)=\mathbf{k}_\z^\top(\mathbf{G}+\gamma {\Id{\Ntr}})^{-1}
\left({\mathbf{X}^+}-\mathbf{F}(\that)\right),
\label{eq:delta_hat_closed}
\end{equation}
where $\mathbf{k}_\z=[\kappa(\z, \zb{1}),\ldots,\kappa(\z, \zb{\Ntr})]^\top$.

\begin{remark}[Affine-in-parameter models]
When the nominal model is affine in the parameters, the joint identification problem can be solved efficiently by eliminating the RKHS coefficients, reducing the optimization to a generalized weighted least-squares problem. The corresponding discrepancy estimate is then obtained from the kernel expansion. We refer the reader to \cite{Donati2026} for details.
\end{remark}

\subsection{{Power-function-adaptive directional weight}}

While the discrepancy estimate $\dhat(\z)$ provides information on the
expected direction of the prediction error, it does not quantify how reliable
such information is at a given query point. This role is naturally played by
the RKHS power function.

The regularized power function is defined as
\begin{equation}
\pi(\z)=\sqrt{\kappa(\z,\z)-\mathbf{k}_\z^\top(\mathbf{G}+\gamma {\Id{\Ntr}})^{-1}\mathbf{k}_\z},
\qquad\gamma\ge0,
\label{eq:power}
\end{equation}
and it is a classical tool in RKHS scattered-data approximation~\cite{Wendland2004}. Specifically, it measures how well the representer $\kappa(\cdot,\z)$ is approximated by the span of the training representers: small values of $\pi(\z)$ indicate a well-supported interpolation region, whereas large values identify extrapolation regions where the learned discrepancy is less reliable.

Rather than estimating the discrepancy itself, $\pi(\z)$ thus acts as a \emph{local confidence index} for $\dhat(\z)$: the smaller the power function, the more confidently the direction predicted by $\dhat(\z)$ can be exploited. The base score of {Definition~\ref{def:sdir}} uses a \emph{constant} weight $\rho$ to set how strongly the discrepancy influences the nonconformity measure; since the confidence in $\dhat$ varies across the input space, it is natural to let this weight depend on {$\z$}. We therefore introduce the \emph{$\pi$-adaptive directional weight}
\begin{equation}
\rho_\pi(\z)=\rhomax\,\phi(\pi(\z)),\qquad 0<\rhomax<1,
\label{eq:rhopi}
\end{equation}
where $\phi:\R_+\to(0,1]$ is monotonically decreasing with $\phi(0)=1$ and $\lim_{\pi\to\infty}\phi(\pi)=0$, for instance $\phi(\pi)=e^{-{c}\pi}$ or $\phi(\pi)=1/(1+{c}\pi)$ with decay rate ${c}>0$. Replacing the constant $\rho$ by $\rho_\pi(\z)$ in \eqref{eq:sdir} yields the power-adaptive directional score
\begin{equation}
{S_{\mathrm{dir}}^{\pi}(\z,d)=
\frac{|d|}
{\bigl(1-\rho_\pi(\z)\bigr)+\rho_\pi(\z)\max\bigl\{\sign(d)\,\dtil(\z),\,0\bigr\}}.}
\label{eq:sdirpi}
\end{equation}
The construction interpolates between two regimes: where $\pi(\z)$ is small the estimate is reliable, $\rho_\pi(\z)\to\rhomax$, and the score is fully directional; where $\pi(\z)$ is large the support is poor, $\rho_\pi(\z)\to0$, and the score reverts to the standard symmetric conformal score {\eqref{eq:ssplit}}. The amount of directional correction is thus automatically matched to the local confidence in $\dhat$. Since $\rho_\pi$ is a fixed function of {$\z$} determined before calibration, {Proposition~\ref{prop:coverage}} applies directly and marginal validity is preserved.
Moreover, with the state-dependent weight \eqref{eq:rhopi}, the same algebra used for Property~\ref{lem:width} and Proposition~\ref{prop:comparison} gives the width $W^{\pi}_{\mathrm{dir}}(\z)=\hat q_{\varepsilon,\pi}\bigl(2(1-\rho_\pi(\z))+\rho_\pi(\z)|\dtil(\z)|\bigr)$ and the \emph{state-dependent} improvement threshold
\begin{equation*}
\tau_\pi(\z)\doteq\frac{2}{\rho_\pi(\z)}\Bigl(\frac{\qsym}{\hat q_{\varepsilon,\pi}}-1+\rho_\pi(\z)\Bigr),
\end{equation*}
with $\hat q_{\varepsilon,\pi}$ the threshold calibrated with the score \eqref{eq:sdirpi} so that $W^{\pi}_{\mathrm{dir}}(\z)\le W_\mathrm{sym}$ if and only if $|\dtil(\z)|\le\tau_\pi(\z)$. At a query point where the support is poor and $\rho_\pi(\z)\to0$, the width tends to $2\hat q_{\varepsilon,\pi}$, i.e.\ it becomes independent of $\dtil(\z)$: the set stops inheriting the error of the discrepancy estimate exactly where that estimate is unreliable. Whether the reverted set is also \emph{narrower} than the baseline is instead a global property of the calibration set, holding if and only if $\hat q_{\varepsilon,\pi}\le\qsym$, which is guaranteed under the alignment condition \eqref{eq:alignment} (the proof of Corollary~\ref{cor:alignment} applies unchanged).

\section{Joint multivariate calibration: a gauge-based directional score}
\label{sec:md_conformal}

%Section~\ref{sec:bonf} handled the multivariate case by reducing it to per-coordinate scalar scores, yielding an axis-aligned box. We now develop the complementary route: calibrating the full disturbance vector jointly through a gauge function, which avoids the Bonferroni decomposition at the price of a non-box, correlation-aware set geometry.

The coordinate-wise construction of Section~\ref{sec:bonf} is attractive
because it produces an axis-aligned box, which is convenient for the set
operations in Section~\ref{sec:prob}, but it pays a union bound over the $\nx$ coordinates. In this section, we instead consider a
joint multivariate approach, in which the whole disturbance vector is
calibrated through a single score, at the cost of a geometry that is no longer a box. The natural starting point is the classical
norm-based, or Mahalanobis, score commonly used in multivariate conformal
prediction~\mbox{\cite{johnstone2021,Messoudi2022,Xu2024}}. 

Given a symmetric positive-definite matrix $\vP=\vP(\z)\succ 0$, chosen prior
to calibration using only training data, we define
\begin{equation}
\Sn_{\mathrm{norm}}(\z,\vd)
=
\|\vd\|_{\vP}.
% \doteq \sqrt{\vd^\top \vP^{-1}\vd}.
\label{eq:smahalanobis}
\end{equation}
For example, $\vP$ may be an estimated residual covariance or a predictive
covariance supplied by the discrepancy model. We suppress its dependence on
$\z$ for simplicity, since it does not affect the subsequent results. The
level sets of \eqref{eq:smahalanobis} are origin-centered ellipsoids whose shape
and orientation account jointly for component scales and correlations, without
requiring a Bonferroni decomposition. Nevertheless, the score is symmetric,
$\Sn_{\mathrm{norm}}(\z,\vd)=\Sn_{\mathrm{norm}}(\z,-\vd)$, and therefore, like
the scalar split score, cannot distinguish errors aligned with the learned
discrepancy from errors in the opposite direction. To obtain a joint
directional score, we replace the norm with the Minkowski gauge of an
asymmetric convex body.
For a fixed $\rho\in(0,1)$, define the \emph{directional capsule}
\begin{equation}
\Crho(\z)
\doteq
\mathbb B_\vP(1-\rho) \oplus[\vzero,\rho\,\vdtil(\z)] ,
\label{eq:gauge_body_weighted}
\end{equation}
where $\vdtil(\z)\in\R^{\nx}$ collects the component-wise normalized discrepancies \eqref{eq:ddef}, and $[\vzero,\rho\,\vdtil(\z)]$ is the segment joining the origin to
$\rho\,\vdtil(\z)$. Thus, $\Crho(\z)$ is obtained by sweeping the ball
$\mathbb B_\vP(1-\rho)$ along the discrepancy direction. For $\vP=\In$, this gives a
Euclidean capsule, for general $\vP$, an ellipsoidal capsule.

\begin{definition}[Multivariate directional gauge score]
\label{def:gdir_score}
The multivariate directional score relative to $\Crho(\z)$ is defined as
\begin{equation}
\Sn_{\mathrm{gdir}}(\z,\vd)
\doteq
g_{\Crho(\z)}(\vd)
=
\inf\{t>0:\vd\in t\Crho(\z)\},
\label{eq:gauge_directional_score}
\end{equation}
where $g_{\mathcal C}$ denotes the Minkowski gauge of a set $\mathcal C$.
\end{definition}

The next lemma characterizes the geometry of the gauge-induced conformal set
and shows that it is  physics-consistent by construction.

\begin{lemma}[Geometry and physics consistency]
\label{lem:gauge_geometry}
For any finite calibrated threshold $\qeps>0$, the conformal set induced by
the multivariate directional score \eqref{eq:gauge_directional_score} is
\begin{equation}
\Dset_{\mathrm{gdir}}(\z)
=
\{\vd:\Sn_{\mathrm{gdir}}(\z,\vd)\le\qeps\}
=
\qeps \Crho(\z).
\label{eq:gauge_directional_set}
\end{equation}
Moreover, $\Dset_{\mathrm{gdir}}(\z)$ is physics-consistent, since
\begin{equation}
\mathbb B_\vP\bigl(\qeps(1-\rho)\bigr)
\subseteq
\Dset_{\mathrm{gdir}}(\z).
\label{eq:gauge_physics_consistency}
\end{equation}
\end{lemma}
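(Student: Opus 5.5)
The plan is to use standard properties of the Minkowski gauge of a compact convex set containing the origin in its interior. First I would check that $\Crho(\z)$ has these properties. The ellipsoid $\mathbb B_\vP(1-\rho)$ is compact, convex, and contains a neighborhood of $\vzero$, because $\vP\succ0$ and $1-\rho>0$. The segment $[\vzero,\rho\,\vdtil(\z)]$ is compact, convex, and contains $\vzero$. A Minkowski sum of compact convex sets is compact and convex. Since $\vzero$ lies in the segment, we also get $\mathbb B_\vP(1-\rho)=\mathbb B_\vP(1-\rho)\oplus\{\vzero\}\subseteq\Crho(\z)$, so $\vzero\in\operatorname{int}\Crho(\z)$. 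From this, $g_{\Crho(\z)}$ is finite everywhere, nonnegative, and positively homogeneous. Also, because the set is convex and contains $\vzero$, we have $s\,\Crho(\z)\subseteq t\,\Crho(\z)$ whenever $0<s\le t$.

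Next I would prove the key identity $\{\vd:g_{\mathcal C}(\vd)\le q\}=q\,\mathcal C$ for $q>0$, writing $\mathcal C=\Crho(\z)$.

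The inclusion $\supseteq$ is immediate: if $\vd\in q\mathcal C$, then $q$ is feasible in the infimum, so $g_{\mathcal C}(\vd)\le q$.

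For $\subseteq$, suppose $g_{\mathcal C}(\vd)\le q$. Then for every $t>q$ there is some feasible $t'<t$, and by the nesting property $\vd\in t'\mathcal C\subseteq t\mathcal C$. So $\vd/t\in\mathcal C$ for all $t>q$. Letting $t\downarrow q$ and using closedness of $\mathcal C$ gives $\vd/q\in\mathcal C$. This closedness step is the only delicate point: without compactness (or at least closedness) of $\Crho(\z)$ the sublevel set could be strictly larger than $q\,\mathcal C$. I would therefore state the compactness of the Minkowski sum explicitly.

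Applying the identity with $q=\qeps$ in \eqref{eq:conformal_D} gives \eqref{eq:gauge_directional_set}.

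Finally, for \eqref{eq:gauge_physics_consistency} I would scale the inclusion $\mathbb B_\vP(1-\rho)\subseteq\Crho(\z)$ by $\qeps>0$. Using $\qeps\,\mathbb B_\vP(1-\rho)=\mathbb B_\vP(\qeps(1-\rho))$, which follows from homogeneity of $\|\cdot\|_\vP$, we get $\mathbb B_\vP(\qeps(1-\rho))\subseteq\qeps\Crho(\z)=\Dset_{\mathrm{gdir}}(\z)$. Physics consistency in the sense of Section~\ref{sec:ph_cons} follows because norms on $\R^{\nx}$ are equivalent, so $\mathbb B_\vP(\qeps(1-\rho))$ contains some $\mathbb B_p(r)$ with $r>0$. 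For instance, $r=\qeps(1-\rho)\sqrt{\lambda_{\min}(\vP)}$ works in the Euclidean case.
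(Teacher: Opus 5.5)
Your proof is correct and follows the paper's route: identify the gauge sublevel set with $\qeps\Crho(\z)$, then scale the inclusion $\mathbb B_\vP(1-\rho)\subseteq\Crho(\z)$ by $\qeps$. The paper only says the first step holds \emph{by definition of the gauge}, whereas you state what it actually uses (closedness and convexity of $\Crho(\z)$ with $\vzero$ in it), and you add the norm-equivalence step linking $\mathbb B_\vP$ to the $\mathbb B_p$ form of physics consistency.
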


\begin{proof}
By definition of the gauge, $\Sn_{\mathrm{gdir}}(\z,\vd)\le\qeps$ is equivalent
to $\vd\in\qeps\Crho(\z)$, which gives \eqref{eq:gauge_directional_set}. The
inclusion \eqref{eq:gauge_physics_consistency} follows from
$\mathbb B_\vP(1-\rho)\subseteq\Crho(\z)$ and scaling by $\qeps$. For
$\qeps>0$ and $\rho<1$, this ball has positive radius, so the origin is an
interior point of $\Dset_{\mathrm{gdir}}(\z)$.
\end{proof}
Notice that the score is
fixed before calibration, so Proposition~\ref{prop:coverage} applies
unchanged. %Finally, it is easy to see that in the scalar case, \eqref{eq:gauge_body_weighted} reduces to the interval whose gauge is exactly the directional score in \eqref{eq:sdir}. Indeed, 
The gauge score is moreover a strict generalization of the scalar one:
for $\nx=1$ and $\vP=1$, if $\dtil(\z)\ge0$ then
$\Crho(\z)=[-(1-\rho),(1-\rho)+\rho\dtil(\z)]$, while for
$\dtil(\z)<0$ the longer side is the negative one, and in both cases the gauge of this interval coincides with the directional score \eqref{eq:sdir}. 
%The gauge of this interval is therefore exactly the scalar directional score in \eqref{eq:sdir}.

\begin{remark}[Computation of the gauge score]
\label{rem:gauge_computation}
The gauge score is computationally tractable, since it reduces to a second-order cone program (SOCP). Indeed, for $\vP=\In$, the score \eqref{eq:gauge_directional_score} can be computed as
\begin{equation}
\begin{aligned}
\Sn_\mathrm{gdir}(\z,\vd)
&=
\min_{t,\alpha}\ t\\
&\quad\mathrm{s.t.}\quad
\|\vd-\alpha\rho\vdtil(\z)\|_2\le t(1-\rho),\\
&\qquad\qquad
0\le\alpha\le t .
\end{aligned}
\label{eq:gauge_socp}
\end{equation}
Here, $t$ is the dilation factor of the capsule and $\alpha\rho\vdtil(\z)$
selects a point on the dilated segment $[\vzero,t\rho\vdtil(\z)]$. Thus,
\eqref{eq:gauge_socp} asks for the smallest dilation for which $\vd$ lies within
distance $t(1-\rho)$ of that segment. This is a SOCP with
only two scalar decision variables. In the weighted case, the Euclidean norm is
replaced by $\|\cdot\|_{\vP}$. Equivalently, membership in $\qeps\Crho(\z)$ can
be checked by projecting $\vd$ onto the segment
$[\vzero,\qeps\rho\vdtil(\z)]$ in the metric induced by $\vP$.
\end{remark}

\begin{proposition}[Quantile bounds and volume-improvement region]
\label{prop:gauge_volume}
Assume $\vP=\In$. Let $\qnorm$ and $\qgdir$ denote the conformal thresholds calibrated on the same
calibration set using, respectively, the norm score
$\Sn_{\mathrm{norm}}(\z,\vd)$ in \eqref{eq:smahalanobis} and the multivariate directional score
$\Sn_{\mathrm{gdir}}$ in \eqref{eq:gauge_directional_score}. {Assume 
$0<\hat q_{\varepsilon,\mathrm{norm}}<\infty$ and $0<\hat q_{\varepsilon,\mathrm{gdir}}<\infty$ and define}
\[
M_\nx\doteq\max_{i=1,\ldots,\Ncal}\|\vdtil(\zb{i})\|_2,
\qquad
\chiqg\doteq\frac{\qnorm}{\qgdir}.
\]
Then
\begin{equation}
1-\rho\le\chiqg\le1-\rho+\rho M_\nx.
\label{eq:gauge_quantile_ratio}
\end{equation}
Moreover, let $\vball_m$ denote the volume of the unit ball in $\R^m$, with
$\vball_0=1$. For finite thresholds,
\[
\operatorname{vol}\bigl(\qgdir\Crho(\z)\bigr)
\le
\operatorname{vol}\bigl(\qnorm\mathbb B_2(1)\bigr)
\]
if and only if
\begin{equation}
\|\vdtil(\z)\|_2\le \tau_{\nx},
\label{eq:gauge_tau_condition}
\end{equation}
where
\begin{equation}
\tau_{\nx}
\doteq
\frac{\vball_{\nx}}
{\vball_{\nx-1}\rho(1-\rho)^{\nx-1}}
\left[
\chiqg^{\nx}-(1-\rho)^{\nx}
\right].
\label{eq:gauge_tau}
\end{equation}
For $\nx=1$, $\tau_{\nx}$ reduces to the scalar threshold in
Proposition~\ref{prop:comparison}, since in that case $\Sn_{\mathrm{norm}}$ coincides with the symmetric score \eqref{eq:ssplit} and $\Sn_{\mathrm{gdir}}$ with the directional score \eqref{eq:sdir}, and hence $\qnorm=\qsym$ and $\qgdir=\qdir$.
\end{proposition}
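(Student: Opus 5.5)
The proof has two independent parts: the quantile bounds \eqref{eq:gauge_quantile_ratio}, obtained by sandwiching the gauge score between rescaled Euclidean norms, and the volume condition, obtained from Steiner's formula for a capsule.

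\emph{Quantile bounds.} The argument mirrors Lemma~\ref{lem:quantile_ineq} and Proposition~\ref{prop:comparison}. With $\vP=\In$, the capsule satisfies the inclusions
\[
\mathbb B_2(1-\rho)\subseteq\Crho(\z)\subseteq\mathbb B_2\bigl(1-\rho+\rho\|\vdtil(\z)\|_2\bigr).
\]
The first inclusion is immediate. The second follows because every point of $\Crho(\z)$ is $\vv+\alpha\rho\vdtil(\z)$ with $\|\vv\|_2\le1-\rho$ and $\alpha\in[0,1]$, and the triangle inequality bounds its norm.

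Gauges reverse inclusions ($\mathcal A\subseteq\mathcal B$ implies $g_{\mathcal B}\le g_{\mathcal A}$), and $g_{\mathbb B_2(r)}(\vd)=\|\vd\|_2/r$. Hence, at each calibration point,
\[
\frac{\|\bar\vd^{(i)}\|_2}{1-\rho+\rho M_\nx}\le\Sn_{\mathrm{gdir}}(\zb{i},\bar\vd^{(i)})\le\frac{\|\bar\vd^{(i)}\|_2}{1-\rho},
\]
where I used $\|\vdtil(\zb{i})\|_2\le M_\nx$. Order statistics are monotone under entrywise domination, exactly as in Lemma~\ref{lem:quantile_ineq}. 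Therefore $\qnorm/(1-\rho+\rho M_\nx)\le\qgdir\le\qnorm/(1-\rho)$. Taking reciprocals and multiplying by $\qnorm$ gives \eqref{eq:gauge_quantile_ratio}.

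\emph{Volume of the capsule.} I would compute $\operatorname{vol}(\Crho(\z))$ directly. Write $r=1-\rho$ and $\ell=\rho\|\vdtil(\z)\|_2$. The body $\mathbb B_2(r)\oplus[\vzero,\rho\vdtil(\z)]$ is a cylinder of radius $r$ and length $\ell$ capped by two half-balls. Its volume is therefore
\[
\vball_\nx r^\nx+\vball_{\nx-1}r^{\nx-1}\ell.
\]
This is also Steiner's formula for a segment, whose only nonzero intrinsic volumes are $V_0=1$ and $V_1=\ell$. It also follows from Cavalieri's principle, slicing orthogonally to the segment.

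For $\nx=1$ the formula reads $2r+\ell$, using $\vball_0=1$ and $\vball_1=2$, and agrees with the interval. I would spell out the cylinder-plus-caps decomposition to justify the formula: after a rotation, take the segment along the first axis; the capsule is then the union of the cylinder and the two end half-balls, which overlap only on sets of measure zero.

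\emph{Volume comparison and scalar case.} Scaling by $\qgdir$ multiplies volume by $\qgdir^\nx$, and $\operatorname{vol}(\qnorm\mathbb B_2(1))=\vball_\nx\qnorm^\nx$. The inequality $\operatorname{vol}(\qgdir\Crho(\z))\le\operatorname{vol}(\qnorm\mathbb B_2(1))$ becomes, after dividing by $\qgdir^\nx>0$,
\[
\vball_\nx(1-\rho)^\nx+\vball_{\nx-1}(1-\rho)^{\nx-1}\rho\|\vdtil(\z)\|_2\le\vball_\nx\chiqg^\nx.
\]
Solving this linear inequality for $\|\vdtil(\z)\|_2$ gives exactly \eqref{eq:gauge_tau_condition}--\eqref{eq:gauge_tau}. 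Because every step is an equivalence, the statement is an "if and only if".

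For $\nx=1$, $\vball_1/\vball_0=2$, so $\tau_1=\frac{2}{\rho}(\chiqg-1+\rho)$. The score identifications noted after Lemma~\ref{lem:gauge_geometry} give $\qnorm=\qsym$ and $\qgdir=\qdir$, hence $\chiqg=\chiq$ and $\tau_1=\tau$.

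The main obstacle is the rigorous volume formula for the capsule. The rest is order-statistic monotonicity and algebra.
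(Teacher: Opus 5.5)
Your proposal is correct and follows essentially the same route as the paper's proof: the sandwich $\mathbb B_2(1-\rho)\subseteq\Crho(\z)\subseteq\mathbb B_2(1-\rho+\rho\|\vdtil(\z)\|_2)$, then gauge reversal combined with the order-statistic argument of Lemma~\ref{lem:quantile_ineq}, then the capsule volume formula and homogeneity of volume, and finally the same scalar reduction. The only difference is that you justify the inclusions and the capsule volume (cylinder plus two half-balls, equivalently Steiner's formula), whereas the paper simply asserts them.
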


\begin{proof}
For every $\z$,
\[
\mathbb B_2(1-\rho)
\subseteq
\Crho(\z)
\subseteq
\mathbb B_2\bigl(1-\rho+\rho\|\vdtil(\z)\|_2\bigr).
\]
Taking gauges gives, for every calibration residual,
\[
\frac{\|\vdb{i}\|_2}{1-\rho+\rho M_\nx}
\le
\Sn_{\mathrm{gdir}}(\zb{i},\vdb{i})
\le
\frac{\|\vdb{i}\|_2}{1-\rho}.
\]
Applying the same order-statistic argument as in
Lemma~\ref{lem:quantile_ineq} yields
\[
\frac{\qnorm}{1-\rho+\rho M_n}\le\qgdir\le\frac{\qnorm}{1-\rho},
\]
which is equivalent to \eqref{eq:gauge_quantile_ratio}.
It remains to compare volumes. Since $\Crho(\z)$ is the Minkowski sum of a ball
of radius $1-\rho$ and a segment of length $\rho\|\vdtil(\z)\|_2$,
\[
\operatorname{vol}\bigl(\Crho(\z)\bigr)
=
\vball_{\nx}(1-\rho)^{\nx}
+
\vball_{\nx-1}(1-\rho)^{\nx-1}\rho\|\vdtil(\z)\|_2 .
\]
Using the homogeneity of volume, the condition
$\operatorname{vol}(\qgdir\Crho(\z))\le
\operatorname{vol}(\qnorm\mathbb B_2(1))$ is therefore equivalent to
\[
\qgdir^{\nx}
\Bigl[
\vball_{\nx}(1-\rho)^{\nx}
+
\vball_{\nx-1}(1-\rho)^{\nx-1}\rho\|\vdtil(\z)\|_2
\Bigr]
\le
\vball_{\nx}\qnorm^{\nx}.
\]
Dividing by $\qgdir^{\nx}$ and using $\chiqg=\qnorm/\qgdir$ gives
\eqref{eq:gauge_tau_condition}--\eqref{eq:gauge_tau}. Finally, for $\nx=1$,
$\vball_1=2$ and $\vball_0=1$, hence
\[
\tau_1
=
\frac{2}{\rho}\bigl(\chiqg-1+\rho\bigr),
\]
which coincides with Proposition~\ref{prop:comparison}, since $\chiqg=\chiq$
when $\nx=1$. For $\nx=1$ and $\vP=1$, the norm score
\eqref{eq:smahalanobis} reduces to $|d|$, i.e.\ the symmetric score
\eqref{eq:ssplit}, and, as observed after Lemma~\ref{lem:gauge_geometry}, the
gauge of $\Crho(\z)$ reduces to the directional score \eqref{eq:sdir}. Hence, the two
score arrays on the calibration set coincide with their
scalar versions, as do their $\eta_\varepsilon$-th order statistics,
yielding $\qnorm=\qsym$ and $\qgdir=\qdir$.
\end{proof}
The proposition shows the same tradeoff as in the scalar analysis. Directional
calibration can reduce the conformal quantile, while the capsule volume grows linearly 
with \(\|\vdtil(\z)\|_2\). The gauge set is therefore smaller than the symmetric
Euclidean ball precisely when the quantile reduction dominates this geometric
enlargement.

%\revnote{Possible additional result: volume improvement of gauge wrt Bonferroni directional box.}

\section{Numerical examples}
\label{sec:numerics}
This section validates the main claims of the paper on a group of systems with given nonlinear structural discrepancy. 

We consider four systems (Table~\ref{tab:sys}) of the form $x^+=g(\z) + w= ax+ bu+\Delta_\mathrm{nl}(x,u)+w$, where the deployed predictor is the affine model $f(\z,{\bm \theta})= \theta_0 + \theta_1x+\theta_2u$, with $\vth=[\theta_0,\theta_1,\theta_2]^\top$. The residual discrepancy $\Delta$ of \eqref{eq:sys_gen} therefore collects the
nonlinear term $\Delta_\mathrm{nl}$, which lies outside the model class,
together with the mismatch between $(a,b)$ and the identified $\that$.
The noise $w$ is a zero-mean Gaussian truncated to $[-3\sigma,3\sigma]$ with $\sigma=0.10$. 

\begin{table}[htb]
\centering
\caption{Test systems.}
\label{tab:sys}
\begin{tabular}{@{}llll@{}}
\toprule
System (discrepancy type) & $a$ & $b$ & $\Delta_\mathrm{nl}(x,u)$ \\
\midrule
$S0$ (cubic stiffness)   & $0.70$ & $0.50$ & $0.35\,x^{3}$ \\
$S1$ (saturation)       & $0.60$ & $0.60$ & $-0.8\,\tanh(1.5x)$ \\
$S2$ (quadratic drag)  & $0.55$ & $0.50$ & $0.30\,x^{2}$ \\
$S3$ (bilinear coupling) & $0.60$ & $0.50$ & $0.45\,x\,u$ \\
\bottomrule
\end{tabular}
\end{table}

%The training dataset $\Dtr$ consists of $N_{\mathrm{tr}}$ transitions. To test the adaptive weight $\rho_\pi$, system $S0$ is trained with a deliberate data gap in the interval $x\in[0.25,1.5]$, forcing the power function $\pi(\z)$ to rise over the unsupported region. The calibration set $\Dcal$ and the test set are generated by drawing independent transitions uniformly over the entire operating box $x\in[-2,2]$. 

Transitions are generated by independently sampling states~$x$ and inputs $u$ uniformly across their respective operating domains ($x\in[-2,2]$ and $u\in[-1,1]$). We then construct the dataset splits: a training set $\Dtr$, a calibration set $\Dcal$, and a test set $\mathcal{D}_\mathrm{te}$ of lengths $N_{\mathrm{tr}}$, $N_{\mathrm{c}}$, and $N_{\mathrm{te}}$, respectively. The training set for system $S0$ is deliberately restricted by imposing a data gap in the interval $x\in[0.25,1.5]$. This forces the RKHS power function $\pi(\z)$ to rise in that region, allowing us to evaluate the behavior of the adaptive weight $\rho_\pi$.

The nominal model parameters $\that$ and the discrepancy estimate $\hat\delta(\z)$ are obtained by solving the joint identification problem \eqref{eq:joint_identification} on the training dataset. We employ a Gaussian RBF kernel, using a median-heuristic bandwidth and a ridge regularization parameter $\gamma = 10^{-2}$ \cite{Donati2026}. The reference scale $\delta_{\mathrm{ref}}$ is chosen as a robust estimate of the standard deviation of the training residuals, computed via the median absolute deviation ($1.4826\cdot\mathrm{MAD}$). Both $\delta_{\mathrm{ref}}$ and the RKHS power function $\pi(\z)$ are evaluated exclusively on the training set to ensure they are strictly fixed before calibration.

We compare three scores: the symmetric baseline $S_\mathrm{sym}$~\eqref{eq:ssplit}, relying solely on the nominal predictor $f$ and the absolute error magnitude; the directional score $S_\mathrm{dir}$~\eqref{eq:sdir}, with constant $\rho=0.5$, which adds the learned discrepancy $\hat\delta$ to exploit anticipated error direction; and the power-adaptive score $S^\pi_\mathrm{dir}$~\eqref{eq:sdirpi} with $\rho_\pi(\z)=\rho_\mathrm{max}e^{-c\pi(\z)}$, $\rho_\mathrm{max}=0.7$, $c=6$, which further integrates the RKHS power function $\pi(\z)$ to account for local model reliability. Unless stated otherwise $\varepsilon=0.1$ (target coverage ${90\%}$), $N_\mathrm{tr}=300$, $\Ncal=500$, $N_\mathrm{te}=2000$, and all statistics are averaged over $300$ Monte-Carlo calibration/test resamples of the fixed training set. All the score parameters $\rho$, $\rho_{\mathrm{max}}$ and $c$ are fixed a
priori, so that the score is fixed before
calibration and Proposition~\ref{prop:coverage} applies as stated.

Fig.~\ref{fig:coverage} reports the empirical marginal coverage of the three scores across all four systems. The boxplots are tightly centered on the nominal target $1-\varepsilon=0.90$ with a spread of approximately $1.5$ percentage points. This confirms that both directional scores preserve the distribution-free finite-sample validity of standard split conformal prediction (Proposition~\ref{prop:coverage}). 
\begin{figure}[t]
\centering
\includegraphics[width=\columnwidth]{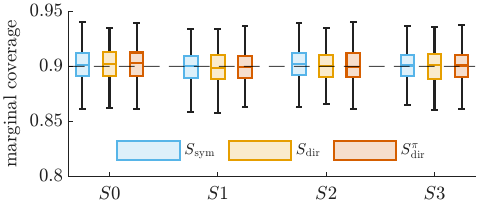}
\caption{Empirical marginal coverage over $300$ Monte-Carlo splits.}
\label{fig:coverage}
\end{figure}
Table~\ref{tab:main} numerically supports this conclusion, with all mean coverage entries falling precisely within $[0.899, 0.901]$. Furthermore, the table reports the mean prediction-interval width and the corresponding percentage mean width reduction achieved by the directional and adaptive scores relative to the symmetric baseline, alongside the mean theoretical improvement threshold~$\tau$ characterizing each system.
\begin{table}[htb]
\centering
\caption{Marginal coverage (Mean $\pm$ std), mean interval width, mean width
reduction over $S_{\mathrm{sym}}$, and mean improvement threshold $\tau$ over $300$ Monte-Carlo splits.}
\label{tab:main}
\setlength{\tabcolsep}{4pt}
\begin{tabular}{@{}llcccc@{}}
\toprule
System & Score & Coverage & Width & Red.\ (\%) & $\tau$ \\
\midrule
\multirow{3}{*}{$S0$}
 & $S_{\mathrm{sym}}$      & $0.901\pm0.015$ & $1.725$ & $-$ & \multirow{3}{*}{$3.835$} \\
 & $S_{\mathrm{dir}}$        & $0.901\pm0.015$ & $0.920$ & $46.7$ & \\
 & $S^{\pi}_{\mathrm{dir}}$  & $0.901\pm0.015$ & $1.048$ & $39.3$ & \\
\midrule
\multirow{3}{*}{$S1$}
 & $S_{\mathrm{sym}}$      & $0.899\pm0.015$ & $0.634$ & $-$ & \multirow{3}{*}{$1.528$} \\
 & $S_{\mathrm{dir}}$        & $0.899\pm0.015$ & $0.474$ & $25.2$ & \\
 & $S^{\pi}_{\mathrm{dir}}$  & $0.899\pm0.015$ & $0.450$ & $29.0$ & \\
\midrule
\multirow{3}{*}{$S2$}
 & $S_{\mathrm{sym}}$      & $0.901\pm0.016$ & $1.193$ & $-$ & \multirow{3}{*}{$2.642$} \\
 & $S_{\mathrm{dir}}$        & $0.899\pm0.016$ & $0.721$ & $39.6$ & \\
 & $S^{\pi}_{\mathrm{dir}}$  & $0.899\pm0.016$ & $0.653$ & $45.3$ & \\
\midrule
\multirow{3}{*}{$S3$}
 & $S_{\mathrm{sym}}$      & $0.900\pm0.015$ & $1.620$ & $-$ & \multirow{3}{*}{$3.821$} \\
 & $S_{\mathrm{dir}}$        & $0.899\pm0.015$ & $0.796$ & $50.8$ & \\
 & $S^{\pi}_{\mathrm{dir}}$  & $0.900\pm0.015$ & $0.706$ & $56.4$ & \\
\bottomrule
\end{tabular}
\end{table}
Averaged over the entire operating domain, the directional sets are significantly narrower than the symmetric baseline at identical coverage. The power-adaptive score $S^{\pi}_{\mathrm{dir}}$ further improves this efficiency on the fully supported systems ($S1$, $S2$, and $S3$). Conversely, for system $S0$, the adaptive set is slightly wider than the constant-$\rho$ set ($39.3\%$ vs.\ $46.7\%$ reduction). This occurs because $\rho_\pi$ collapses inside the deliberate training gap, safely reverting the set toward the symmetric baseline where the discrepancy estimate lacks data support. Fig.~\ref{fig:power} illustrates this mechanism by isolating the RKHS reliability signal on system $S0$.
\begin{figure}[t]
\centering
\includegraphics[width=\columnwidth]{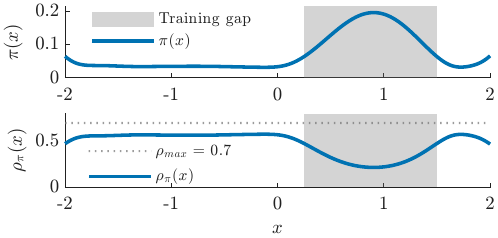}
\caption{RKHS power function $\pi(x)$ (top) and induced adaptive weight $\rho_\pi(x)$ (bottom) as a function of the state $x$, at fixed input, for system $S0$.}
\label{fig:power}
\end{figure}

To validate the theoretical efficiency of the directional construction, Fig.~\ref{fig:improve} overlays the empirical per-point width ratio $W_{\mathrm{dir}}/W_{\mathrm{sym}}$ against the theoretical bound, computed as $\frac{\qdir}{\qsym} \left((1-\rho) + \frac{\rho}{2}|\dtil(\mathbf{\z})|\right)$ derived from Property~\ref{lem:width}, for a single representative test run. For each newly generated test point $\mathbf{\z}$, the normalized discrepancy $\dtil(\mathbf{\z})$ is evaluated, and its absolute value is checked against the improvement threshold $\tau$. The crossover where the sets have equal width occurs exactly at this threshold $\tau$. For system $S0$ and the considered run, $\tau=3.742$, and since the entire achieved range of $|\dtil(\mathbf{\z})|$ on the test set lies strictly to its left ($|\dtil(\mathbf{\z})| \le \tau$), every evaluated test point falls within the improvement region and benefits from a narrower directional set. %Furthermore, as reported in the last column of Table~\ref{tab:main}, the empirical threshold satisfies $\tau \ge 2$ for systems $S0$, $S2$, and $S3$. This confirms that these systems operate in the regime described by Corollary~\ref{cor:alignment}, mathematically guaranteeing a narrower directional set for any test point where the anticipated discrepancy satisfies $|\dtil(z)| \le 2$.
\begin{figure}[t]
\centering
\includegraphics[width=\columnwidth]{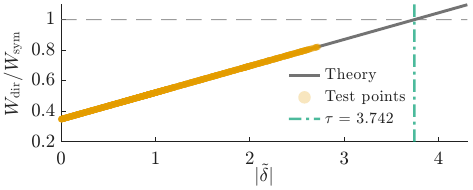}
\caption{Empirical width ratio $W_{\mathrm{dir}}/W_{\mathrm{sym}}$ vs.\
$|\dtil|$ against the theoretical line. The crossover is the
improvement threshold $\tau$.}
\label{fig:improve}
\end{figure}

Finally, Fig.~\ref{fig:rho} sweeps the sensitivity parameter $\rho$ for system $S0$ for a single run. The mean directional width decreases monotonically from the symmetric baseline ($\rho\rightarrow0$) down to a $57\%$ reduction near $\rho\approx0.85$, while the empirical marginal coverage stays %rigorously pinned at
{very close to} $0.90$. Meanwhile, it tracks the improvement threshold $\tau(\rho)$. Notably, $\tau$ decreases as $\rho$ grows,  reflecting a natural geometric trade-off: increasing~$\rho$ extracts more efficiency by placing greater trust in the discrepancy estimate, but it consequently shrinks the theoretical improvement region, making the directional set less forgiving of large localized estimation errors.
\begin{figure}[t]
\centering
\includegraphics[width=\columnwidth]{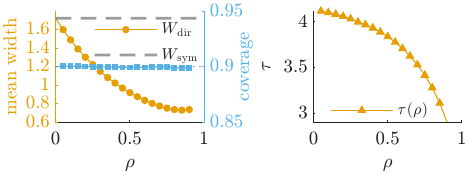}
\caption{Mean directional width and marginal coverage as a function of $\rho$ (left), and the improvement threshold $\tau(\rho)$ across the sweep (right).}
\label{fig:rho}
\end{figure}

\subsection{Multidimensional case}
A vector-valued disturbance can be
calibrated in two ways: component-wise, combining the per-component intervals
by a Bonferroni correction, as detailed in Section~\ref{sec:bonf}, or jointly, through the gauge score of
Section~\ref{sec:md_conformal}. We evaluate both on a three-dimensional nonlinear
dynamical system governed by the following equations:
\begin{equation}
\begin{aligned}
x^{+}_i = 0.9x_i &+ \frac{1}{1+x_{i-1}^{3}+0.05\,x_{i+1}^{3}}\\
   & + 0.12\sin\!\big(2x_{i+1}\big) + w_i ,
\end{aligned}
\label{eq:ext}
\end{equation}
where $x_i$ denotes the $i$-th component of the state $\vx\in\R^{3}$, the
indices are cyclic with the convention $x_{0}\doteq x_{3}$ and
$x_{4}\doteq x_{1}$, and
$w_i\sim\mathcal{N}(0,\sigma^{2})$, truncated at $\pm3\sigma$ with
$\sigma=0.02$, is the disturbance on the $i$-th channel.

No exogenous input is present in this example, so $\z=\vx$. The deployed predictor is affine in the parameters, i.e., $f_i(\vx)=\hat{\bm \theta}_i^{\top}\varphi(\vx)$ with $\hat{\bm \theta}_i\in\R^{7}$ estimated on the cubic basis
$\varphi(\vx)=[\,1,x_1,x_2,x_3,x_1^{3},x_2^{3},x_3^{3}\,]^{\top}$.
Within the dynamics~\eqref{eq:ext}, it reproduces exactly only the linear term $0.9x_i$, while the cubic monomials yield a partial polynomial
approximation of the remaining terms. The unmodeled discrepancy of each component
therefore collects what is left of the nonlinear contributions.

The same identification procedure of the previous example is applied, i.e., \eqref{eq:joint_identification} is solved component-wise.
Kernel, median-heuristic bandwidth and ridge parameter $\gamma=10^{-2}$ are
unchanged, while each component carries its own reference scale
$\delta_{\mathrm{ref},i}=1.4826\cdot\mathrm{MAD}$, computed on its training residuals. Samples are drawn uniformly from $[0,5]^3$.
We use $\varepsilon=0.10$
($\varepsilon_j=0.0333$), $\rho=0.5$, $N_{\mathrm{tr}}=3500$,
$N_{\mathrm{c}}=1800$, $N_{\mathrm{te}}=1800$, averaged over $40$ Monte-Carlo calibration/test splits. Four sets are compared,
all calibrated on the same data and differing only in the nonconformity score:
the component-wise symmetric and directional sets, $S_\mathrm{sym}$ and
$S_\mathrm{dir}$, and the jointly calibrated norm ball and gauge capsule,
$\Sn_\mathrm{norm}$ and $\Sn_\mathrm{gdir}$.

%Table~\ref{tab:ext} collects the marginal behavior. Both methods attain the per-component level $1-\varepsilon_j\approx0.967$ and the joint level $0.90$. Directional conditioning preserves coverage while narrowing every component by $61$--$63\%$, giving a $94\%$ reduction in box \emph{volume}.
Table~\ref{tab:ext} collects the marginal behavior. In the coordinate-wise construction both scores attain the per-component level $1-\varepsilon_j\approx0.967$ and the joint level $0.90$. Directional conditioning preserves coverage while narrowing every component by $61$--$63\%$, giving a $94\%$ reduction in box \emph{volume}. Joint calibration removes the conservativeness of the union bound: the Bonferroni box over-covers at $0.907$, whereas the norm ball and the gauge capsule sit on the nominal level, $0.900$ and $0.899$. The directional principle carries over to the new geometry, the capsule being $96\%$ smaller in volume than the ball at equal coverage, and roughly one half the volume of the directional Bonferroni box.
\begin{comment}
\begin{table}[htb]
\centering
\caption{3D multivariate CP. Mean over $40$ splits. Target coverage is $0.967$ per component and $0.900$ jointly. Size refers to interval \emph{width} for individual components and \emph{volume} for the joint box.}
\label{tab:ext}
\begin{tabular}{@{}lccccc@{}}
\toprule
& \multicolumn{2}{c}{Coverage} & \multicolumn{2}{c}{Size} & \\
\cmidrule(lr){2-3} \cmidrule(lr){4-5}
& $S_\mathrm{sym}$ & $S_\mathrm{dir}$ & $S_\mathrm{sym}$ & $S_\mathrm{dir}$ & Red.\ ($\%$) \\
\midrule
$x_1$ & $0.967$ & $0.967$ & $0.560$ & $0.218$ & $61$ \\
$x_2$ & $0.967$ & $0.966$ & $0.561$ & $0.213$ & $62$ \\
$x_3$ & $0.967$ & $0.968$ & $0.569$ & $0.211$ & $63$ \\
\midrule
Joint box & $0.907$ & $0.906$ & $0.1787$ & $0.0099$ & $94$ \\
\bottomrule
\end{tabular}
\end{table}
\end{comment}
\begin{table}[thb]
\centering
\caption{3D multivariate CP. Mean over $40$ splits. Target coverage is $0.967$
per component and $0.900$ jointly. Size refers to interval \emph{width} for
individual components and \emph{volume} for the joint sets.}
\label{tab:ext}
\begin{tabular}{@{}lccccc@{}}
\toprule
& \multicolumn{2}{c}{Coverage} & \multicolumn{2}{c}{Size} & \\
\cmidrule(lr){2-3} \cmidrule(lr){4-5}
& sym. & dir. & sym. & dir. & Red.\ ($\%$) \\
\midrule
\multicolumn{6}{@{}l}{\emph{Coordinate-wise} ($S_\mathrm{sym}$, $S_\mathrm{dir}$)} \\
$x_1$      & $0.967$ & $0.967$ & $0.560$  & $0.218$  & $61$ \\
$x_2$      & $0.967$ & $0.966$ & $0.561$  & $0.213$  & $62$ \\
$x_3$      & $0.967$ & $0.968$ & $0.569$  & $0.211$  & $63$ \\
Joint box  & $0.907$ & $0.906$ & $0.1787$ & $0.0099$ & $94$ \\
\midrule
\multicolumn{6}{@{}l}{\emph{Joint calibration} ($\Sn_\mathrm{norm}$, $\Sn_\mathrm{gdir}$)} \\
Joint set  & $0.900$ & $0.899$ & $0.1124$ & $0.0049$ & $96$ \\
\bottomrule
\end{tabular}
\end{table}

\begin{figure*}[t!]
    \centering
    \includegraphics[width=\textwidth]{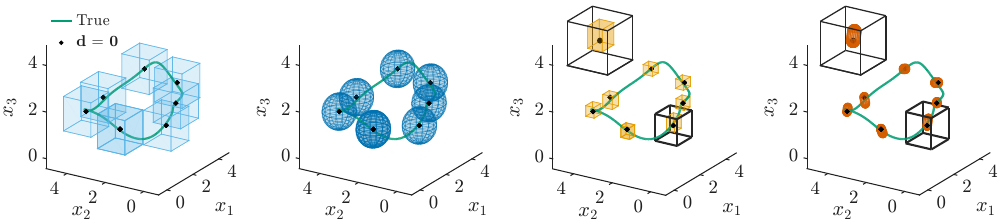}
    %\caption{Prediction boxes evaluated along the system's trajectory. All sets are drawn magnified $\times2.5$ for visibility; the magnification is common to the four panels, so relative sizes are exact. Insets show true proportions at the marked location.}
    \caption{Conformal disturbance sets evaluated along the system trajectory, with the nominal prediction ($\mathbf{d}=\mathbf{0}$) marked at each location. Coordinate-wise construction, combined by a Bonferroni correction at level $\varepsilon/3$: symmetric box~$\mathbb{D}_{\mathrm{sym}}$ (\swbox{Csym}) and directional box~$\mathbb{D}_{\mathrm{dir}}$ (\swbox{Cdir}). Joint calibration at level $\varepsilon$, without any union bound: norm ball~$\mathbb{D}_{\mathrm{norm}}$ (\swball{Cnorm}) 
    and gauge capsule~$\mathbb{D}_{\mathrm{gdir}}$ (\swcap{Cgdir}). All panels share viewpoint, axes and scale. Sets are drawn magnified $\times 2.5$ for visibility. The magnification is common to the four panels, so relative sizes are exact. The insets show the directional box and the gauge capsule in true proportion at the marked location.}
    \label{fig:ext3d}
\end{figure*}
Figs.~\ref{fig:ext3d} and~\ref{fig:extproj} illustrate these results geometrically. The efficiency gain of the directional scores over their symmetric baselines is immediately evident, and so is the structural difference between the two families: while the nominal prediction is strictly centered within the symmetric sets, the directional ones reallocate their volume along the learned discrepancy, anchoring asymmetrically around the prediction while retaining $\vd =\mathbf{0}$ as an interior point, as guaranteed by Corollary~\ref{cor:zero} and Lemma~\ref{lem:gauge_geometry}.
\begin{figure}[tbp]
    \centering
    \includegraphics[width=\columnwidth]{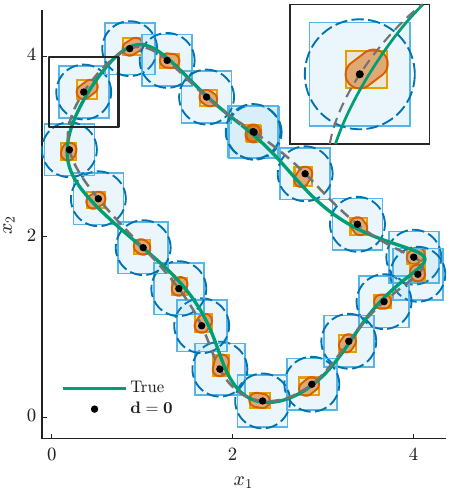}
    \caption{Exact projection of the disturbance sets onto the $(x_1,x_2)$ plane,
    at true scale, around the nominal prediction $f(\bm{\zeta},\bm{\hat\theta})$
    (\swdash{Cnom}): symmetric box $\mathbb{D}_{\mathrm{sym}}$ (\swbox{Csym}) and
    directional box $\mathbb{D}_{\mathrm{dir}}$ (\swbox{Cdir}), norm ball
    $\mathbb{D}_{\mathrm{norm}}$ (\swballd{Cnorm}) and gauge capsule
    $\mathbb{D}_{\mathrm{gdir}}$ (\swcap{Cgdir}). The inset details the four sets
    at the marked location, where the directional ones lean along the learned
    discrepancy while keeping $\mathbf{d}=\mathbf{0}$ interior.}
    \label{fig:extproj}
\end{figure}

\section{{Conclusions}}
\label{sec:conclusions}
We presented a conformal framework that quantifies the error of a deployed
physics-based model directly in its prediction-error coordinates, yielding
disturbance sets that are anchored at the nominal prediction and physics
consistent by construction. A directional nonconformity score turns the learned
discrepancy into prior information on the error direction, producing asymmetric,
less conservative sets while retaining the finite-sample validity of split
conformal prediction; the efficiency gain is confined to an explicitly
characterizable region, and the RKHS power function adapts the directional
strength to local model reliability. Both a coordinate-wise Bonferroni box and a
joint gauge-based score were developed for the multivariate case.

Future work includes extending the guarantees to closed-loop, non-exchangeable
data through weighted or adaptive calibration, and dependence-aware tightening. The final goal is embedding
the calibrated sets in stochastic tube-based control and set-membership estimation to
establish closed-loop constraint-satisfaction and recursive-feasibility
guarantees.

\bibliographystyle{IEEEtran}
\bibliography{root}

\end{document}